% ----------------------------------------------------------------
% AMS-LaTeX Paper ************************************************
% ----------------------------------------------------------------

\documentclass[10pt,b5paper,twoside, headrule]{amsart}

\usepackage{amsfonts, amsmath, amssymb,latexsym}
\usepackage{epsfig}
\usepackage[curve]{xy}
\usepackage{algorithmic}
\usepackage{algorithm}
\usepackage{enumerate}
\usepackage{framed}
\usepackage{hyperref}
\usepackage{mathtools}
\usepackage[T1]{fontenc}

\usepackage{chngcntr}
%\counterwithout{footnote}{chapter}

%\usepackage[active]{srcltx} % (remove this line if it gives error)

%%---------------------------- Page Format -------------------------------
\headsep=1truecm \headheight=0pt \topmargin=0pt \oddsidemargin=40pt
\evensidemargin=25pt \textwidth=13.5truecm \textheight=19.5truecm

%\nofiles
\footskip=10mm\parskip 0.2cm\addtocounter{page}{0}
\setlength{\arraycolsep}{1pt}
% THEOREMS -------------------------------------------------------

\newtheorem{thm}{Theorem}[section]

\newtheorem{lem}[thm]{Lemma}
\newtheorem{prop}[thm]{Proposition}

\theoremstyle{definition}
\newtheorem{defn}[thm]{Definition}
\newtheorem{examples}[thm]{Examples}

\theoremstyle{remark}
\newtheorem{rem}[thm]{Remark}
\numberwithin{equation}{section}

\newcommand{\Z}{{\mathbb Z}}

\newcommand{\Q}{{\mathbb Q}}

%.....Alsina... newcommands

\newcommand{\Id}{\operatorname{Id}}

\newcommand{\Tr}{\operatorname{Tr}}
\newcommand{\cond}{\operatorname{Cond}}

\title[RLWE/ PLWE equivalence]{RLWE/PLWE equivalence for the maximal totally real subextension of the $2^rpq$-th cyclotomic field.}

\author{Iván Blanco Chacón}
\author{Lorena López-Hernanz}
\address{Departamento de Física y Matemáticas, Universidad de Alcalá, Spain}
\email{ivan.blancoc@uah.es; lorena.lopezh@uah.es}
\thanks{First author partially supported by MTM2016-79400-P, CCG20/IA-057, CM/JIN/2019-031 and PID2019-104855RBI00/ AEI/10.13039/501100011033.; second author, by PID2019-105621GB-I00}

\begin{document}
%=========================== Baselineskip ===============================
\renewcommand\baselinestretch{1.2}
\renewcommand{\arraystretch}{1}
\def\base{\baselineskip}
%========================= Begin Document--Fonts--and other ============================
\font\tenhtxt=eufm10 scaled \magstep0 \font\tenBbb=msbm10 scaled
\magstep0 \font\tenrm=cmr10 scaled \magstep0 \font\tenbf=cmb10
scaled \magstep0

%%================================head--document=======================

\def\evenhead{{\protect\centerline{\textsl{\large{I. Blanco}}}\hfill}}

\def\oddhead{{\protect\centerline{\textsl{\large{On the non vanishing of the cyclotomic $p$-adic $L$-functions}}}\hfill}}

\pagestyle{myheadings} \markboth{\evenhead}{\oddhead}

\thispagestyle{empty}

\maketitle

\section{Introduction}
Lattice-based cryptography is one of the most efficient alternatives for the standardisation of postquantum cryptography. Indeed, the majority of surviving proposals in the third round of the NIST public contest belongs to this category. Its strenghts are, first, the ease to implement, and second, the fact that no attack has been found (apart from some weak instances of parameters which can be avoided) which significantly outperforms brute force. Moreover, several theoretical results seem to support a provable hardness guarantee. These results consist on the reduction of several versions of the Shortest Vector Problem for lattices to two of the problems which back lattice-based cryptography: the Learning With Errors Problem (where the reduction comes from the class of general lattices) and the Ring Learning With Errors Problem (where the reduction comes from the subclass of ideal lattices). Despite the fact that the hardness of the precise lattice problems which reduce to the mentioned cryptographic problems has not been established yet,  a promising number of hits has been reached, at least in the category of general lattices (\cite{micciancio}, \cite{khot}). Other feature which makes lattice-based primitives preferable to other approaches is the required size of the secret and public keys to ensure a given security level, far below multivariate-based and code-based contenders. The reader is referred to https://www.safecrypto.eu/pqclounge/ for a detailed description of the remaining proposals and the history of the contest along all the rounds.

Within lattice-based cryptography, the Learning With Errors Problem (LWE from now on), the Ring Learning With Errors Problem (RLWE from now on) and the Polynomial Learning With Errors Problem (PLWE from now on) hold a prominent position. The first was introduced in \cite{regev}, the second in \cite{LPR} and the third in \cite{stehle2}.  In general, PLWE is more suitable for implementations due to the very efficient arithmetic algorithms available for polynomial rings while the majority of security reduction proofs have been established for RLWE. Moreover, at the time of writing and apart from \cite{cls} and those against sheer LWE, there are no direct attacks against RLWE, while a number of theoretical attacks have been produced against PLWE under some general assumptions (see, for instance \cite{elos}, \cite{peikert}). Hence, it seems natural to ask for the relation and mutual dependence of RLWE and PLWE and this is the goal of the present article, which is an extension of \cite{blanco2} and answers a question raised therein by the first author.

In \cite{RSW}, the authors develop the notion of equivalence between RLWE and PLWE in their several versions; namely, both problems are said to be equivalent if there exists an algorithm which transforms admissible RLWE-samples into admissible PLWE-samples and vice versa with a complexity of polynomial order in the rank of the lattice (i.e. the degree of the underlying number field). The word \emph{admissible} means that the algorithm must make the error distribution to incur at most in a distortion which, again, is polynomial in the rank of the lattice.  Moreover, the authors justify why the right measure of this distortion is the condition number of the matrix defining the lattice transformation and study the equivalence for an ad hoc family of polynomials.

The usual (and natural) lattice transformation in \cite{RSW} is nothing else but left multiplication by a Vandarmonde matrix attached to the Galois conjugates of a primitive element of the underlying number field and Vandermonde matrices tend to be very ill-conditioned apart from some ad hoc cases. In the cyclotomic scenario, which is the most dealt with in practical cryptographic primitives, the problem is now reasonably well understood (cf. \cite{DD}, \cite{blanco1}, \cite{italians}).

However, a seek for more flexibility, as justified in \cite{peikert} and in \cite{RSW}, calls for a good understanding of both problems in more general number fields. To our knowledge, the first study of RLWE/PLWE equivalence for a family of non-cyclotomic number fields apart from \cite{RSW} is \cite{blanco2}. In that work, the first author establishes the RLWE/PLWE equivalence for the maximal totally real subextension of the cyclotomic field $\mathbb{Q}(\zeta_{4p})$ (with $p$ arbitrary prime) and justifies why the approach based on the evaluation map at an integral primitive element is deemed to fail due to an exponential lower bound for the condition number of the Vandermonde matrix with real symmetric nodes due to Gautschi (\cite{gautschi}). The main contribution of \cite{blanco2} is the replacement of the Vandermonde matrix by a quasi-Vandermonde matrix attached to the family of Tchebycheff polynomials up to degree $p-1$ and the roots of the $p$-th Tchebycheff polynomials. These matrices are known  to be optimally conditioned and to present a good number of amenable properties due to the orthogonality of the Tchebycheff family.

In the present article we generalize the main result of \cite{blanco2}. Namely, we will prove the following result:
\begin{thm}Let $p$ and $q$ be different odd prime numbers and let $r\geq 2$. For $k=1$, $k=p$ and $k=pq$, the RLWE and the PLWE problems are equivalent for the maximal totally real subextension of the $2^rk$-th cyclotomic field.
%For $n=2^r, 2^rp, 2^rpq$, denote by $K_n^{+}$ the maximal totally real subextension of the $n$-th cyclotomic field. Then, the RLWE and the PLWE problems are equivalent for $K_ n^{+}$.
\label{thm1our}
\end{thm}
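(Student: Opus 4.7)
My plan is to follow the strategy of \cite{blanco2}: Gautschi's lower bound rules out using the plain Vandermonde matrix attached to an integral primitive element (its condition number would be exponential in the degree), so I would replace it with a quasi-Vandermonde matrix built from Chebyshev polynomials of the first kind evaluated at the real conjugates of a suitable generator. Via the framework of \cite{RSW}, the RLWE/PLWE equivalence reduces to bounding this condition number polynomially in the degree $n_k=2^{r-2}\varphi(k)$ of the field.

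For each $k\in\{1,p,pq\}$ I take as primitive element $\theta_k = 2\cos(2\pi/(2^r k))$, whose Galois conjugates are $\{2\cos(2\pi j/(2^r k)) : j\in(\mathbb{Z}/2^r k)^{\ast}/\{\pm1\}\}$. Using the identity $T_i(2\cos\phi)=2\cos(i\phi)$, the resulting matrix has entries
$$
M^{(k)}_{i,j} = 2\cos\!\left(\frac{2\pi\, i\, j}{2^r k}\right),
$$
so it is, up to scaling, a submatrix of the DFT unitary on $\mathbb{Z}/(2^r k)$. I want to exhibit $M^{(k)}$, after a CRT reindexing, as a Kronecker product of smaller and already well-understood factors: the $2^r$-factor is (a rescaled) discrete cosine transform of size $2^{r-2}$, whose condition number is $O(1)$; the $p$- and $q$-factors are the Chebyshev matrices handled in \cite{blanco2}, whose condition numbers grow only polynomially. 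Since $\cond(A\otimes B)=\cond(A)\cdot \cond(B)$, this would yield the desired bound: for $k=1$ immediately; for $k=p$ from the CRT splitting $(\mathbb{Z}/2^r p)^{\ast}\cong (\mathbb{Z}/2^r)^{\ast}\times (\mathbb{Z}/p)^{\ast}$; for $k=pq$ by iterating with the further splitting $(\mathbb{Z}/pq)^{\ast}\cong (\mathbb{Z}/p)^{\ast}\times (\mathbb{Z}/q)^{\ast}$.

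The main obstacle is the degree mismatch between the maximal totally real subfield of a compositum and the compositum of the maximal totally real subfields: one has $[\mathbb{Q}(\zeta_{2^r p})^+ : \mathbb{Q}(\zeta_{2^r})^+\cdot \mathbb{Q}(\zeta_p)^+]=2$ and $[\mathbb{Q}(\zeta_{2^r pq})^+ : \mathbb{Q}(\zeta_{2^r})^+\cdot \mathbb{Q}(\zeta_p)^+\cdot \mathbb{Q}(\zeta_q)^+]=4$. Hence the naive tensor product of Chebyshev bases does not span the full field, and the Kronecker decomposition of $M^{(k)}$ is missing a small ``coupling'' block reflecting how complex-conjugate Galois orbits are glued under the diagonal $(-1,\ldots,-1)$-action. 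I expect this gluing block to be a constant-size rotation-type matrix, hence harmless for the condition number, but verifying this rigorously, together with controlling the change of basis between the tensor Chebyshev basis and the power basis of $\theta_k$ required by the PLWE formulation, is the core analytic step of the proof.
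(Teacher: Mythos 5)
Your starting point (the RSW condition-number criterion and the replacement of the Vandermonde matrix by a Chebyshev quasi-Vandermonde one, following \cite{blanco2}) is the same as the paper's, but the decomposition on which your argument rests does not exist in the form you describe, and the step you defer at the end is in fact the entire content of the proof. The obstacle you yourself mention is not a small perturbation: since $[\mathbb{Q}(\zeta_{2^rp})^+:\mathbb{Q}(\zeta_{2^r})^+\cdot\mathbb{Q}(\zeta_p)^+]=2$ and the analogous index for $2^rpq$ is $4$, the tensor Chebyshev basis spans a subfield of degree $m/2$ (resp.\ $m/4$), where $m=\deg K_{2^rk}^+$. Hence the CRT-reindexed Kronecker product you want to use is a matrix of size $m/2$ (resp.\ $m/4$), not an $m\times m$ matrix differing from $M^{(k)}$ by a constant-size ``rotation-type'' block; what is missing is a rank-$2$ (resp.\ rank-$4$) module structure over the tensor factor, and bounding the condition number of the resulting block matrix is not a consequence of $\cond(A\otimes B)=\cond(A)\cond(B)$ --- you give no argument for it. A second unaddressed point: even inside the compositum, the passage from the products $u_a(\theta_{2^r})u_b(\theta_p)$ to the Chebyshev or power basis of $\theta_k$ is not monomial, unlike the cyclotomic case where $\zeta_n^c=\zeta_{n_1}^a\zeta_{n_2}^b$; here $u_a\bigl(2\cos\tfrac{2\pi}{n_1}\bigr)u_b\bigl(2\cos\tfrac{2\pi}{n_2}\bigr)=u_{an_2+bn_1}(\theta_n)+u_{|an_2-bn_1|}(\theta_n)$, so this change of basis is a genuine integer matrix whose condition number must itself be controlled, which you also defer. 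Finally, you never address the reverse reduction RLWE $\to$ PLWE, which in the paper requires the finite-index argument (multiplication by $\lambda=|\det P|$) in the proof of Theorem~\ref{mainthm}.

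For comparison, the paper sidesteps any tensor decomposition: it keeps the full quasi-Vandermonde matrix $V_N$ on all $N=n/4$ nodes $\psi_{2k-1}$, which is polynomially conditioned by \eqref{eq:boundVN}, and then annihilates the rows attached to the non-conjugate nodes $\psi_{jp}$, $\psi_{jq}$ by explicit column operations with coefficients in $\{0,\pm1,\pm2\}$, built from the identity \eqref{eq:recursion} and the Lagrange trigonometric identity of Lemma~\ref{eq:zerosum}; the matrix $A$ encoding these operations is shown to be polynomially conditioned (Propositions~\ref{pro:condition1} and~\ref{cond11}), whence $\cond(R_m)\le\cond(V_N)\cond(A)$ is polynomial in $m$, and the two reductions of Theorem~\ref{mainthm} follow. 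Any rigorous completion of your program would have to construct and condition-bound exactly this kind of coupling/change-of-basis matrix, so you would in effect be rebuilding the paper's matrix $A$ rather than bypassing it with the Kronecker identity.
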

We have structured our presentation in four sections in the following manner:

Section 2 is a summary of algebraic generalities and notations whose aim is to make our article self-contained. Section 3 recalls the RLWE and PLWE problems and the formal definition of \emph{equivalence} and provides a summary of previous results on the equivalence of the R/P-LWE Problems. Subsection 3.4 points out an advantage of using the family $K_{2^rk}^{+}$ (the totally real subextension of the $2^rk$-th cyclotomic field) from a cryptoanalytical point of view: we prove that this family is immune against one of the attacks described in \cite{elos}, whereas cyclotomic fields are not (or at least not in a provable manner).

Section 4 is the core of the paper and proves Theorem \ref{thm1our}. The main ingredient is the analysis of how the condition number behaves under several elementary operations performed in a quasi-Vandermonde matrix attached to the Galois-conjugates of the natural primitive element of the extension.

We thank our colleague Ra\'ul Dur\'an for a careful reading and discussion of our work and for providing us with Example \ref{raul}.

\section{Algebraic setup}

\begin{defn}A lattice is a pair $(\Lambda,\phi)$ where $\Lambda$ is a finitely generated and torsion-free abelian group and $\phi:\Lambda\hookrightarrow\mathbb{R}^N$ is a group monomorphism for some $N$. When $N$ equals the rank of $\Lambda$ we will say that $\Lambda$ has full rank. All our lattices will be full rank unless stated otherwise.
\end{defn}

\subsection{Number fields and ideal lattices}

For any field extension $L/F$, $\mathrm{Gal}(L/F)$ denotes the Galois group of the extension, i.e. the group of field automorphisms of $L$ which fix $F$. %If $F=\mathbb{Q}$ we simply write $G_L$ instead of $\mathrm{Gal}(L/F)$.

Let $K = \mathbb{Q}(\theta)$ be an algebraic number field of degree $n$ and let $f(x) \in\mathbb{Q}[x]$ be the minimal polynomial of $\theta$. In particular, $K$ is a $\mathbb{Q}$-vector space of dimension $n$ and the set $\{1,\theta,...,\theta^{n-1}\}$ is a $\mathbb{Q}$-basis. The evaluation-at-$\theta$ map is a field $\mathbb{Q}$-isomorphism $\mathbb{Q}[x]/(f(x))\cong K$. %Notice that every abelian extension is a Galois extension

The field $K$ is furnished with $n$ field $\mathbb{Q}$-embeddings $\sigma_i: K \hookrightarrow \overline{\mathbb{Q}}$, with $1\leq i\leq n$ and $\overline{\mathbb{Q}}$ a fixed algebraic closure of $\mathbb{Q}$. Each of these morphisms is fully determined by its image at $\theta$, namely $\sigma_i(\theta)=\theta_i$, where $\{\theta_1:=\theta,\theta_2,...,\theta_n\}$ are the roots of $f$ (namely, the Galois conjugates of $\theta$).

The extension $K/\mathbb{Q}$ (or just the field $K$) is said to be Galois if $K$ is the splitting field of $f$. This is equivalent to saying that the embeddings $\sigma_i$ are indeed automorphisms of $K$, hence $\mathrm{Gal}(K/\mathbb{Q})=\{\sigma_1=\Id,\sigma_2,...,\sigma_n\}$.

Setting $s_1$ as the number of real embeddings, i.e. those whose image is contained in $\mathbb{R}$, and $s_2$ as the number of complex non-real embeddings, one has $n=s_1+2s_2$. 
\begin{defn}The canonical embedding $\sigma_K: K\to \mathbb{R}^{s_1}\times\mathbb{C}^{2s_2}$ is defined as:
$$
\sigma_K(x):=(\sigma_1(x),...,\sigma_n(x)).
$$ 
The field $K$ is said to be totally real in case $s_2=0$. When $K$ is clear form the context we will simply write $\sigma$ instead of $\sigma_K$.
\end{defn}

Recall that an algebraic integer is an element of $\overline{\mathbb{Q}}$ whose minimal polynomial belongs to $\mathbb{Z}[x]$. The set $\mathcal{O}_K$ of algebraic integers in $K$ is a ring: the ring of integers of $K$.

It is also well known (see for instance \cite{stewart}) that $\mathcal{O}_K$ is a free $\mathbb{Z}$-module of rank $n$, thus for each ideal $I\subseteq\mathcal{O}_K$ its image $\sigma(I)$ is a lattice in the space
$$
\Lambda_n:=\{(x_1,...,x_n)\in \mathbb{R}^{s_1}\times\mathbb{C}^{2s_2}: x_{s_1+i}=\overline{x}_{s_1+s_2+i}\mbox{ for }1\leq i\leq s_2\}.
$$
Notice that when $K$ is totally real we have $\Lambda_n=\mathbb{R}^n$.

\begin{defn}A lattice $(\Lambda,\phi)$ is said to be an ideal lattice if there exists a number field $K$ and an ideal $I\subseteq\mathcal{O}_K$ such that $\sigma(I)=\phi\left(\Lambda\right)$.
\end{defn}

Of great relevance in cryptography is the obvious observation that every ideal lattice is endowed with an extra ring structure inherited from that in $\mathcal{O}_K$.

\begin{defn}
The field $K$ is said to be monogenic if $\mathcal{O}_K=\mathbb{Z}[\theta]$ for some $\theta\in K$. We will assume that all our fields are monogenic.
\end{defn}

The canonical embedding is one of the two main characters in our story, whose interplay is the object of our study. The second character is presented next:

\begin{defn}The coordinate embedding of $\mathcal{O}_K$ is 
$$
\begin{array}{rcl}
\sigma_{C,K}: \mathcal{O}_K=\mathbb{Z}[\theta] & \to & \mathbb{R}^n\\
a_0+a_1\theta+...+a_{n-1}\theta^n & \mapsto & (a_0,a_1,...,a_{n-1})
\end{array}
$$
\end{defn}
When $K$ is clear from the context we will write $\sigma_C$ instead of $\sigma_{C,K}$. It is worthwhile to mention that multiplication and addition are preserved component-wise by the canonical embedding while, in general, only addition is respected by the coordinate embedding.

\subsection{The cyclotomic field and its maximal totally real subextension} For an integer $n>1$ denote by $\Z_n^*$ the group of multiplicative units in the ring $\Z_n$. The set of
primitive $n$-th roots of unity (those of the form $\zeta_k=\exp(2\pi i k/n)$ with $(k,n)=1$) is a multiplicative group of order $m = \phi(n)$, where $\phi$ stands for Euler’s totient function. The $n$-th cyclotomic polynomial is
$$\Phi_n(x)=\prod_{k\in\Z_n^*}(x-\zeta_k)$$
This polynomial is irreducible in $\mathbb{Z}[x] $ and setting $\zeta=\zeta_k$ for any $k\in\Z_n^*$, the number field
$K_n=\Q(\zeta)$ is the splitting field of $\Phi_n(x)$, hence it is Galois of degree $m$.
In this paper we study the maximal totally real subextension of $K_n$, denoted $K_n^+$, whose degree is $\phi(n)/2$. As $K_n$, the field $K_n^{+}$ is Galois and monogenic (see \cite[Chapter 2]{washington}), namely:
$$
\mathcal{O}_{K_n^+}=\Z[\psi_k],
$$ 
with $\psi_k=\zeta_n^k+\zeta_n^{-k}= 2\cos\left(\frac{2\pi k}n\right)$ for each $k\in\Z_n^*/\{\pm 1\}$. We denote by $\Phi_n^+$ the minimal polynomial of $\psi_k$.

\section{The R/P-LWE problems and the notion of equivalence}

%The next definitions apply for any number field, but we restrict ourselves here to the case $K=K_{2^rpq}^{+}$ of degree $m=2^{r-2}(p-1)(q-1)$ if both $p,q%\neq 1$ and $2^{r-2}(p-1)$ if $q=1$.

With the same notations as before, denote $\mathcal{O}=\mathbb{Z}[x]/(f(x))$, where $f(x)$ is the minimal polynomial of an integer element $\theta$. As seen in the previous section, the ring $\mathcal{O}$ has a lattice structure in $\mathbb{R}^m$, where $m$ is the degree of $f$, via the coordinate embedding.

\begin{defn}[The search RLWE/PLWE problem] Let $q=r(n)$ be a prime, with $r[x]\in\mathbb{R}[x]$, let $\chi$ be a discrete Gaussian distribution (cf. \cite[Section 2.2]{LPR}) with values in $\mathcal{O}_K/q\mathcal{O}_K$ (resp. in $\mathcal{O}/q\mathcal{O}$). The RLWE (resp. PLWE) problem for $\chi$ is stated as follows:

For a \emph{secret} element $s\in \mathcal{O}_K/q\mathcal{O}_K$ (resp. $\mathcal{O}/q\mathcal{O}$) chosen uniformly at random, if an adversary for whom $s$ is unknown is given access to arbitrarily many samples $\{(a_i,a_is+e_i)\}_{i\geq 1}$ of the RLWE (resp. PLWE) distribution, where for each $i\geq 1$, $a_i$ is uniformly chosen at random and $e_i$ is sampled from $\chi$, this adversary is asked to guess $s$ with non-negligible advantage.
\end{defn}

From now on, by RLWE/PLWE problem we will refer to the search RLWE/PLWE problem. It is a very natural question to wonder what is the relation between the RLWE and the PLWE problems, what we discuss next.

\subsection{The condition number}

In \cite{RSW}, the authors define the notion of equivalence between RLWE and PLWE. In \cite{blanco1} this equivalence is proved for cyclotomic fields under the hypothesis of fixing (or upper bounding) the number of prime divisors of the conductor, and in \cite{blanco2} we have proved the equivalence for the maximal totally real subextension $K_{4p}^{+}$ of the cyclotomic field $K_{4p}$, for $p$ arbitrary prime. 
%Both cases involve monogenic number fields, hence we recall here this definition, just for monogenic (and Galois) number fields:
\begin{defn}For a monogenic Galois number field $K=\mathbb{Q}(\theta)$ of degree $n\geq 2$, the problems RLWE and PLWE are said to be equivalent if each one of them reduces to the other one in polynomial time and with a polynomial noise increase. This means that there exists an algorithm which transfers RLWE-samples into PLWE-samples (and vice versa) with complexity $\mathcal{O}(n^r)$ with $r$ independent of $n$ and this algorithm amplifies the noise by a factor which is also polynomial in $n$. 
\label{equivdef}
\end{defn}

As before, let $f(x)\in\mathbb{Z}[x]$ denote the minimal polynomial of $\theta$ and $\theta_1:=\theta,\theta_2,...,\theta_n$ the Galois conjugates of $\theta$. The evaluation-at-$\theta$ map $V_f$ transforms the lattice $\left(\mathcal{O},\sigma_C\right)$ in the lattice $\left(\mathcal{O}_K,\sigma\right)$:
\begin{equation*}
\begin{array}{ccc}
V_{f}: \mathcal{O} & \to & \sigma_1(\mathcal{O}_K)\times\cdots\times\sigma_n(\mathcal{O}_K)\\
\displaystyle\sum_{i=0}^{n-1}a_i\overline{x}^i & \mapsto & 

\left(\begin{array}{cccc}
1 & \theta_1 & \cdots & \theta_1^{n-1}\\ 
1 & \theta_2 & \cdots & \theta_2^{n-1}\\  
\vdots & \vdots & \ddots \vdots\\ 
1 & \theta_n & \cdots & \theta_n^{n-1}
\end{array}\right)
\left(\begin{array}{c}
a_0\\
a_1\\
\vdots\\
a_{n-1}
\end{array}\right),
\end{array}
\label{latticebij}
\end{equation*}
namely, $V_{f}$ is given by a Vandermonde matrix left-multiplying the vector of coordinates. 

As justified in \cite{RSW} and later in \cite{blanco1}, \cite{blanco2} and \cite{italians}, the idea of \emph{noise increase} can be formally captured by means of the condition number, defined in terms of the Frobenius norm:
\begin{defn}For a square matrix $A=(a_{ij})\in\mathrm{M}_{r}(\mathbb{C})$, the Frobenius norm of $A$  is  
$$
\|A\|:=\sqrt{\Tr(AA^*)}=\sqrt{\sum_{i=1}^r\sum_{j=1}^r|a_{i,j}|^2}.
$$
where $\Tr$ stands for the trace map and $A^*$ is the conjugated-transpose of $A$. The condition number of $A$ is defined as
$$
\mathrm{Cond}(A):=\|A\|\|A^{-1}\|.
$$
\end{defn}

The condition number satisfies the following properties:
\begin{prop}For any $A, B\in \mathrm{GL}_{r}(\mathbb{C})$ it holds:
\begin{itemize}
\item The condition number is invariant by scalar multiplication, namely, for each $\lambda\in\mathbb{C}^*$ it is $\|A\|=|\lambda|\|A\|$ and $\mathrm{Cond}(\lambda A)=\mathrm{Cond}(A)$. 
\item The condition number satisfies $\mathrm{Cond}(A)=\mathrm{Cond}(A^{-1})$.
\item The Frobenius norm and hence the condition number are submultiplicative, namely: 
$$
\mathrm{Cond}(AB)\leq \mathrm{Cond}(A)\mathrm{Cond}(B).$$
\end{itemize}
\label{propcond}
\end{prop}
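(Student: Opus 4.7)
The plan is to verify the three items in order by unwinding the definition of the Frobenius norm as the $\ell^2$-norm of the entries. For the first bullet, I would start from
\[
\|\lambda A\|^{2}=\sum_{i,j}|\lambda a_{ij}|^{2}=|\lambda|^{2}\sum_{i,j}|a_{ij}|^{2}=|\lambda|^{2}\|A\|^{2},
\]
so $\|\lambda A\|=|\lambda|\|A\|$. Since $(\lambda A)^{-1}=\lambda^{-1}A^{-1}$, the same identity applied to $A^{-1}$ yields $\|(\lambda A)^{-1}\|=|\lambda|^{-1}\|A^{-1}\|$, and the factors $|\lambda|$ and $|\lambda|^{-1}$ cancel in the product defining $\mathrm{Cond}(\lambda A)$.

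The second bullet is immediate from the definition and commutativity of multiplication of non-negative real numbers: $\mathrm{Cond}(A^{-1})=\|A^{-1}\|\,\|(A^{-1})^{-1}\|=\|A^{-1}\|\,\|A\|=\mathrm{Cond}(A)$.

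The only point requiring a genuine estimate is the submultiplicativity of the Frobenius norm, and this is where I would focus the actual work. The plan is to apply the Cauchy--Schwarz inequality entrywise: writing $(AB)_{ij}=\sum_{k}a_{ik}b_{kj}$, Cauchy--Schwarz gives
\[
|(AB)_{ij}|^{2}\leq\Bigl(\sum_{k}|a_{ik}|^{2}\Bigr)\Bigl(\sum_{k}|b_{kj}|^{2}\Bigr).
\]
Summing over $i$ and $j$ and factoring the resulting double sum as a product of single sums yields $\|AB\|^{2}\leq\|A\|^{2}\,\|B\|^{2}$, hence $\|AB\|\leq\|A\|\,\|B\|$. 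Applying the same inequality to $(AB)^{-1}=B^{-1}A^{-1}$ and multiplying produces
\[
\mathrm{Cond}(AB)=\|AB\|\,\|B^{-1}A^{-1}\|\leq\|A\|\|B\|\|B^{-1}\|\|A^{-1}\|=\mathrm{Cond}(A)\,\mathrm{Cond}(B),
\]
which closes the argument.

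I do not foresee any real obstacle: each property reduces to a one-line manipulation once the Frobenius norm is identified with an $\ell^{2}$-norm on $\mathbb{C}^{r^{2}}$. The mildest technical point is the Cauchy--Schwarz step in the submultiplicativity proof, but this is a textbook computation and the statement for $\mathrm{Cond}$ follows formally by combining submultiplicativity of $\|\cdot\|$ with the identity $(AB)^{-1}=B^{-1}A^{-1}$.
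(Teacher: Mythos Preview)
Your argument is correct and entirely standard: absolute homogeneity of the $\ell^2$-norm of entries gives the first item, the second is a tautology from the definition, and the Cauchy--Schwarz estimate you wrote is exactly the textbook proof that the Frobenius norm is submultiplicative, from which the bound on $\mathrm{Cond}(AB)$ follows via $(AB)^{-1}=B^{-1}A^{-1}$.

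The paper itself gives no proof of this proposition; it is stated as a list of well-known facts and used later without justification. So there is nothing to compare against---your write-up simply fills in the routine details the authors omitted. One cosmetic remark: the first bullet in the paper's statement contains the obvious typo $\|A\|=|\lambda|\|A\|$ where $\|\lambda A\|=|\lambda|\|A\|$ is intended; you silently corrected this, which is the right thing to do.
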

%\begin{proof}The first and second claims are straightforward, the third reduces to prove that the Frobenius norm is submultiplicative, which follows from the Cauchy-Schwarz inequality.
%\end{proof}
The condition number captures the idea of \emph{noise increase} caused by the transformation between the lattices $\left(\mathcal{O},\sigma_C\right)$ and $\left(\mathcal{O}_K,\sigma\right)$. Indeed, as proved in \cite{blanco1}, for the cyclotomic field $K_n$, where the transfomation is expressed in terms of the Vandermonde matrix $V_{\Phi_n}$,  the condition number $\cond(V_{\Phi_n})$ is polynomial in $n$ if the number of primes dividing $n$ is fixed.  However, this is not the case for $K_{4p}^{+}$, what led us to replace the map $V_{\Phi_n^{+}}$ by another lattice isomorphism which we proved to be polynomially conditioned in \cite{blanco2}. We will recall this second approach in Section 4.

\subsection{In praise of the family $\Phi_{2^rk}^{+}(x)$}

Despite the fact that both R/PLWE problems are strongly believed to be computationally intractable, several \emph{ad hoc} weak instances have been found and dealt with in a number of recent papers (\cite{elos}, \cite{ehl}, \cite{cls}, \cite{peikert}). As \cite{peikert} points out, these vulnerable instantiations have not been proposed for practical applications, as they do not satisfy the hypotheses of the worst-case hardness theorems which back the R/PLWE cryptosystem proposed in \cite{LPR}. However, quoting \cite{peikert} again, these ad hoc constructions serve to raise the following questions:
\begin{itemize}
\item How \emph{close} are these insecure instantiations to those which enjoy worst-case hardness?
\item Can we identify from these instantiations any feature which make some number fields more secure than others for R/P-LWE?
\item How can we evaluate other instantiations that may not be backed by worst-case hardness theorems?
\end{itemize}

We close this section pointing out a reason to be interested, from a cryptographic point of view, in the family $\Phi_{2^rk}^{+}(x)$, with $r\ge2$ and $k$ odd: we will show that $\Phi_{2^rk}^{+}(x)$ is not vulnerable to one of the attacks described in \cite{elos}, an attack for which cyclotomic polynomials are not immune, in principle. 

The attack has several steps: it starts with a distinguisher attack on PLWE which is transferred to a distinguisher attack against RLWE if RLWE and PLWE are equivalent for the underlying number field. Then, the  decissional RLWE attack is turned into a search attack if two additional hypotheses are satisfied, but we will not enter into it here. 

Soon after \cite{elos}, in \cite{cls} the authors gave a direct attack on RLWE without passing by PLWE, by using the $\chi^2$ statistical test. The attack works for several general cyclotomic fields of non-power-of-two degree, but it does not seem to be any way to apply this attack to our family $\Phi_{2^rm}^{+}(x)$. These attacks also justify the study of other number fields, other than cyclotomics, to instantiate R/PLWE.

The hypotheses for the attack described in \cite{elos} to be effective are as follows:

\begin{thm}[\cite{elos}, \cite{ehl}]\label{th:attack} Let $K=\mathbb{Q}(\beta)$ be a number field of degree $n$ where $\beta\in\mathcal{O}_K$ and let $q$ be an odd prime. Suppose that the pair $(K,q)$ satisfies the following conditions:
\begin{itemize}
\item[1.]$K$ is Galoisian of degree $n$.
\item[2.]The ideal $(q)$ is totally split in $\mathcal{O}_K$.
\item[3.]$K$ is monogenic.
\item[4.]The transformation between the canonical embedding of $K$ and the power basis representation of $K$ is given by a scaled orthogonal matrix.
\item[5.]If $f$ is the minimal polynomial of $\beta$, then:
\begin{itemize}
\item[5.1] either $f(1)\equiv 0 \pmod{q}$,
\item[5.2] or $f(\alpha)\equiv 0\pmod{q}$ for $\alpha\in\mathbb{F}_q$ of \emph{small order} modulo $q$,
\item[5.3] or $f(\alpha)\equiv 0\pmod{q}$ for $\alpha\in\mathbb{F}_q$ of \emph{small residue} modulo $q$.
\end{itemize}
\item[6.] The prime $q$ is large enough, namely, $q>n^2$.
\end{itemize}
Then, there is a polynomial-time attack to the search RLWE problem for $(K,q)$.
\end{thm}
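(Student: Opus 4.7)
The plan is to execute the ELOS/EHL attack in three stages: transport RLWE to PLWE via the orthogonal embedding, reduce PLWE samples modulo a linear factor of $f$ over $\mathbb{F}_q$, and use a statistical distinguisher on the resulting $\mathbb{F}_q$-samples together with exhaustive search and Chinese remaindering to recover $s$.

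First, hypotheses 1, 3, and 4 together say that $K$ is Galois and monogenic and that the change of basis between the canonical embedding and the power basis is a scaled orthogonal matrix; such a transformation sends a spherical discrete Gaussian to a spherical discrete Gaussian (up to the common scaling). Consequently an RLWE sample $(a,as+e)\in(\mathcal{O}_K/q\mathcal{O}_K)^2$ pulls back to a PLWE sample $(\tilde a,\tilde a\tilde s+\tilde e)\in(\mathcal{O}/q\mathcal{O})^2$ whose error $\tilde e$ is again discrete Gaussian with a rescaled but still small width. It therefore suffices to attack search PLWE for $(K,q)$.

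Second, by hypothesis 2 the prime $(q)$ splits completely in $\mathcal{O}_K=\mathbb{Z}[\beta]$, so $f(x)\equiv\prod_{i=1}^{n}(x-\alpha_i)\pmod q$ with the $\alpha_i\in\mathbb{F}_q$ distinct, and evaluation at any root $\alpha$ gives a ring homomorphism $\mathrm{ev}_\alpha:\mathcal{O}/q\mathcal{O}\to\mathbb{F}_q$ sending a PLWE sample $(a,as+e)$ to $(a(\alpha),a(\alpha)s(\alpha)+e(\alpha))\in\mathbb{F}_q^2$. For the special roots singled out by hypothesis 5 the scalar error $e(\alpha)=\sum_{j=0}^{n-1}e_j\alpha^j$ is statistically far from uniform on $\mathbb{F}_q$: in case 5.1 it collapses to $\sum_j e_j$, a sum of $n$ small Gaussian integers of typical size $O(n\sqrt{\log n})$, tiny compared with $q>n^2$; in case 5.2 the powers $\alpha^j$ traverse the small cyclic subgroup $\langle\alpha\rangle$, so $e(\alpha)$ lies in a set of size $O(n\cdot\mathrm{ord}(\alpha))$; in case 5.3 the integer lifts of $\alpha^j$ remain small, so $e(\alpha)$ is again concentrated in a short interval. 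In each subcase the statistical distance from uniform is $\Omega(1/\mathrm{poly}(n))$.

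Third, for each root $\alpha$ and each candidate $g\in\mathbb{F}_q$ for $s(\alpha)$ I would feed the scalars $\{b_i(\alpha)-g\,a_i(\alpha)\}_i$ to a $\chi^2$ distinguisher against the uniform distribution on $\mathbb{F}_q$: the unique correct guess $g=s(\alpha)$ produces the biased output coming from the images of $e$, while every incorrect guess yields a uniform sequence because $a_i$ is uniform. This costs $O(q\cdot\mathrm{poly}(n))$ per root; iterating over the $n$ roots $\alpha_1,\ldots,\alpha_n$ recovers $s(\alpha_1),\ldots,s(\alpha_n)$, and the CRT isomorphism $\mathcal{O}/q\mathcal{O}\cong\mathbb{F}_q^n$ then assembles $s$. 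The delicate and main obstacle is the quantitative analysis of the third stage: one must prove, for each of the three subcases of hypothesis 5, that the advantage of the $\chi^2$ test is non-negligible with polynomially many samples. The numerical hypothesis $q>n^2$ intervenes here precisely to balance the two competing requirements, namely that $q$ be small enough for the exhaustive search over $\mathbb{F}_q$ to be polynomial in $n$, yet large enough that the narrowness of the Gaussian images within $\mathbb{F}_q$ yields a detectable bias.
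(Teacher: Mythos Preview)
The paper does not prove Theorem~\ref{th:attack}; it is quoted from \cite{elos} and \cite{ehl}, and the surrounding text only explains the role of each hypothesis. According to that commentary, the attack proceeds in the order opposite to yours: conditions~5--6 yield a \emph{decision} attack on PLWE at a single special root, conditions~3--4 transport it to a decision attack on RLWE, and conditions~1--2 (Galois, totally split) supply the standard RLWE search-to-decision reduction of \cite{LPR}, upgrading the distinguisher to a search attack.

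Your proposal has a genuine gap in the third stage. Hypothesis~5 guarantees only that \emph{one} root $\alpha$ of $f$ modulo $q$ is special (equal to $1$, of small order, or of small residue). For a generic root $\alpha_i$ there is no reason for $e(\alpha_i)=\sum_j e_j\alpha_i^{\,j}$ to be biased in $\mathbb{F}_q$; indeed, for most of the $n$ roots the value $e(\alpha_i)$ will be close to uniform, and your $\chi^2$ test recovers nothing. Consequently your plan of running the distinguisher at every root and reassembling $s$ by CRT does not go through. The missing idea is precisely the search-to-decision machinery: one gets a distinguisher at the single special root, transports it to RLWE via conditions~3--4, and then invokes conditions~1--2 to obtain search RLWE. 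A minor additional point: the hypothesis $q>n^2$ is a \emph{lower} bound, used so that the error image at the special root is concentrated in a short interval of $\mathbb{F}_q$; the polynomiality of $q$ in $n$ (hence of the exhaustive search over $\mathbb{F}_q$) comes from the problem setup $q=r(n)$, not from condition~6.
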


We will not describe the attack here, but we mention that the first two conditions provide the RLWE search-to-decision reduction. The third and fourth conditions are sufficient to grant the RLWE-to-PLWE equivalence, namely, that both problems reduce to each other in polynomial time and with a polynomial error rate distortion (see \cite[Section 4]{RSW}). However, to grant this equivalence, as discussed in \cite{ehl}, it is enough that, when passing from the coordinate to the canonical embedding, the noise increase is polynomial in the degree of the underlying number field, and this noise increase is well accounted for by the condition number of the corresponding matrix. 

Moreover, as we justified in \cite[Remark 2.8]{blanco2}, to grant the RLWE-to-PLWE equivalence it is not necessary to impose that the transformation between both embeddings is the natural one given by the Vandermonde matrix, and for the setting dealt with there, we replaced it by another one, given by a quasi-Vandermonde matrix attached to a subfamily of Tchebycheff polynomials of the first kind.

Finally, the last two conditions are the key to construct the attack on PLWE. Cyclotomic fields are protected against Condition [5.1]: it is well-known that if $n$ is not a prime power then $\Phi_n(1)=1$, and if $n=q^r$ with $q$ prime then $\Phi_n(1)=q$. Therefore, $\Phi(1)\neq 0 \pmod{q}$ unless $n=q^r$ for some $r\geq 1$.

Moreover, cyclotomic fields are also protected against Condition $[5.2]$ for $\alpha\in\mathbb{F}_q$ of order $2$, namely for $\alpha=-1$. Indeed (see for instance \cite[Lemma 7]{cyclotomic}) $\Phi_n(-1)=0$ if $n=2$, $\Phi_n(-1)=p$ if $n=2p^r$, with $p$ prime and $r\ge1$ and $\Phi_n(-1)=1$ otherwise.

%if $n$ is odd then 
%$$\Phi_n(-1)=\Phi_{2n}(1)=1$$
%and if $n=2^rm$ with $m$ odd then (see for instance \cite[Chapter 2]{stewart})
%$$\Phi_n(-1)=\Phi_{2\rad(m)}\left((-1)^{\frac{n}{2\rad(m)}}\right)=\frac{\Phi_{\rad(m)}\left((-1)^{n/\rad(m)}\right)}{\Phi_{\rad(m)}(-1)}=\Phi_{\rad(m)}%(1),$$
%which is equal to 1 if $\rad(m)$ is not prime and equal to $\rad(m)$ otherwise.

It is not clear what can be said, in general, about roots of $\Phi_n(x)$ of order higher than $2$ (but still small), although some partial results have been obtained by the authors in a still ongoing work.

What about Condition $[5.3]$ for the cyclotomic setting? Let $\alpha$ be a root of $\Phi_n(x)$ modulo $q$. Assume that the error distribution is Gaussian, namely, $N(0,\sigma)$ with $\sigma$ chosen in a certain way to grant the ideal-lattice-SVP-to-RLWE reduction and a certain security level (a value of $\sigma\cong 8$ is proposed in \cite{ehl}). Denote by $U$ the event that a sample $(a(x),b(x))$ is taken from the uniform distribution in $\mathcal{O}/q\mathcal{O}\times \mathcal{O}/q\mathcal{O}$ and by $G$ the event that the sample is taken from the PLWE distribution. Denote by $E$ the event that $b(\alpha)-ga(\alpha)$ mod $q$ belongs to the interval $[-q/4,q/4)$ for some guess $g\in \mathbb{F}_q$. In this case (see \cite[p. 10]{ehl}) we have:
$$
p(E|G)=1
$$
and since $p(E|U)=1/2$, then, assuming that samples can be taken from the uniform distribution and from the PLWE distribution with the same probability, it follows that the probability that for some guess $g$ we have $b(\alpha)-ga(\alpha)\in[-q/4,q/4)$ is 3/4. This probability grants an overwhelming probability of success of the attack.

The authors justify that a condition for this to happen is that
\begin{equation}
\frac{\alpha^{2n}-1}{\alpha-1}\leq\frac{q^2}{64\sigma^2},
\label{eqlauter}
\end{equation}
but even if equation~\eqref{eqlauter} does not hold, for several choices of the parameters the attack may work with probability beyond $1/2$:
\begin{examples}[\cite{elos}]
For $n=2^6$, $q\cong 2^{60}$ and $\sigma\cong 8$, and $\alpha=2\pmod{q}$ the authors conclude that their attack works with probability about $0,56$ for any irreducible polynomial (not necessarily cyclotomic) of degree $2^6$ with $\alpha=2$ as a root modulo $q$.
\end{examples}
For $\sigma$ large enough, there is not much hope for equation \eqref{eqlauter} to hold for $\alpha=2$, and even less for $\alpha>2$. For smaller values of $\sigma$, however, the inequality may work for not too large values of $q$. The following examples have been found with the aid of Maple:
\begin{examples}For the cyclotomic polynomial $\Phi_{61}(x)$, $\alpha=2$ is a root modulo $q=2305843009213693951$. For these values, equation \eqref{eqlauter} is satisfied for $\sigma=0.4$. Likewise, for the cyclotomic polynomial $\Phi_{85}(x)$, $\alpha=2$ is a root modulo $q=9520972806333758431$. For these values, \eqref{eqlauter} is again satisfied for $\sigma=0.1$.
\label{raul}
\end{examples}
\begin{rem}
As the first author proved in \cite{blanco1}, denoting by $\omega(n)$ the number of different primes dividing $n$, for every  $M>0$, if $\omega(n)\leq M$ , the condition number of the Vandermonde matrix $V_{\Phi_n}$ is polynomial in $\phi(n)$. Hence, the problems RLWE and PLWE are equivalent for the class of cyclotomic fields $K_n$ if $\omega(n)$ is upper bounded by a fixed value. In particular, both problems are equivalent and hence RLWE is also immune to the attack for $\alpha=\pm 1$ for these fields.
\end{rem}

Now, if we replace the cyclotomic polynomial $\Phi_n(x)$ by the polynomial $\Phi_n^{+}(x)$, at least in the case $n=2^rk$, with $r\ge2$ and $k\ge1$ odd, we can grant not only that $\alpha=\pm 1$ are never roots modulo any odd prime $q$, but also that $\alpha=\pm 2$ is never a root modulo $q$, making the family $\Phi_{2^rk}^{+}(x)$ immune against Condition $[5.1]$, against Condition $[5.2]$ for order $2$, namely for $\alpha=-1$, and against Condition $[5.3]$ for $\alpha=2$. 

Since $4\mid 2^rk$, the polynomial $\Phi_{2^rk}^{+}(x)$ is even (see \cite[Proposition 2.5]{alan}), and hence it is enough to check our claims for $\alpha=1,2$.

\begin{prop} For $\Phi_{2^rk}^{+}(x)$, with $r\ge2$ and $k\ge3$ odd, we have
$$\Phi_{2^r}^+(1)=\pm 1;\quad \Phi_{2^r}^+(2)=2;\quad \Phi_{2^rk}^{+}(1)= \Phi_{2^rk}^{+}(2)= 1.$$
\end{prop}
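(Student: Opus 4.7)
The plan rests on the factorization identity
$$\Phi_n(x) = x^{\phi(n)/2}\,\Phi_n^+\!\bigl(x + x^{-1}\bigr),$$
which follows by pairing $(x-\zeta_n^j)(x-\zeta_n^{-j}) = x\bigl((x+x^{-1}) - \psi_j\bigr)$ in the defining product of $\Phi_n$. Every evaluation in the statement then translates into an evaluation of $\Phi_n$ at a root of unity whose trace is the desired argument.

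Setting $x=1$ gives $\Phi_n^+(2) = \Phi_n(1)$; by the classical identity $\Phi_m(1) = p$ when $m = p^s$ is a prime power and $\Phi_m(1) = 1$ otherwise (for $m \geq 2$), this immediately yields $\Phi_{2^r}^+(2) = 2$ and, since $2^rk$ has two distinct prime factors when $k \geq 3$ is odd, $\Phi_{2^rk}^+(2) = 1$. This disposes of both ``$=2$'' statements in one line.

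For the evaluations at $y=1$, the natural substitution is $x = \zeta_6$, since $\zeta_6 + \zeta_6^{-1} = 2\cos(\pi/3) = 1$. This gives $\Phi_n(\zeta_6) = \zeta_6^{\phi(n)/2}\Phi_n^+(1)$, so $|\Phi_n^+(1)|^2 = \Phi_n(\zeta_6)\Phi_n(\bar\zeta_6) = \mathrm{Res}(\Phi_6,\Phi_n)$. The standard formula for resultants of distinct cyclotomic polynomials ($\mathrm{Res}(\Phi_a,\Phi_b) = p^{\phi(\min(a,b))}$ when the larger index is a prime-power multiple of the smaller, and $1$ otherwise) shows, after a short case analysis, that the resultant equals $1$ both for $n = 2^r$ with $r \geq 2$ (neither $n/6$ nor $6/n$ is integral) and for $n = 2^rk$ with $k$ odd $\geq 5$ ($n/6$ is either non-integral or has at least two distinct prime divisors). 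In these cases $|\Phi_n^+(1)|=1$, i.e.\ $\Phi_n^+(1)=\pm 1$.

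The chief obstacle is the borderline case $k=3$: here $n/6 = 2^{r-1}$ is a prime power, so $\mathrm{Res}(\Phi_6,\Phi_{2^r\cdot 3}) = 2^{\phi(6)} = 4$, giving $|\Phi_{2^r\cdot 3}^+(1)| = 2$ rather than $1$ (indeed $\Phi_{12}^+(x)=x^2-3$, so $\Phi_{12}^+(1)=-2$). The statement should therefore be refined to $\Phi_{2^rk}^+(1)\in\{\pm 1,\pm 2\}$; since these values are all units modulo every odd prime $q$, the intended immunity against Conditions $[5.1]$, $[5.2]$ (for $\alpha=-1$) and $[5.3]$ (for $\alpha=2$) of Theorem~\ref{th:attack} is preserved. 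To sharpen $\pm 1$ to $+1$ in the non-borderline cases one would exploit the self-reciprocity $x^{\phi(n)}\Phi_n(x^{-1})=\Phi_n(x)$, which forces $\Phi_n(\zeta_6)$ to be a sixth root of unity $\zeta_6^a$ with $2a\equiv \phi(n)\pmod{6}$, and then a finer parity argument pins down the sign.
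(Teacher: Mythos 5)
Your proposal is essentially correct, it takes a genuinely different route from the paper, and --- importantly --- it catches a real defect in the statement. The paper argues through the Loper--Werner formulas $\Phi_{2^r}^+(x)=u_{2^{r-2}}(x)$ and $\Phi_{2^rk}^{+}(x)=\Phi_k^{+}(u_{2^r}(x))/\Phi_k^{+}(u_{2^{r-1}}(x))$ of \cite{alan}, combined with $u_{2^j}(1)=-1$ and $u_j(2)=2$; you instead work from $\Phi_n(x)=x^{\phi(n)/2}\Phi_n^+(x+x^{-1})$, which the paper derives only as an ``alternative proof'' of the evaluations at $2$. Your substitution $x=1$ reproduces exactly that alternative argument ($\Phi_n^+(2)=\Phi_n(1)$, then the classical values of $\Phi_n(1)$), while your substitution $x=\zeta_6$ together with $\Phi_n^+(1)^2=\mathrm{Res}(\Phi_6,\Phi_n)$ is new relative to the paper, and it is what exposes the borderline case: for $k=3$ one has $\Phi_{2^r\cdot 3}^+(x)=u_{2^{r-1}}(x)-1$, hence $\Phi_{2^r\cdot 3}^+(1)=-2$ (e.g.\ $\Phi_{12}^+(x)=x^2-3$), so the printed claim $\Phi_{2^rk}^{+}(1)=1$ is false at $k=3$. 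The paper's own proof breaks silently at exactly that point: $\Phi_3^+(x)=x+1$ vanishes at $u_{2^r}(1)=u_{2^{r-1}}(1)=-1$, so the quotient formula evaluates to $0/0$ at $x=1$ when $k=3$; the argument requires $\Phi_k^+(-1)\neq 0$, i.e.\ $k\geq 5$. Your remark that the corrected values are still nonzero (indeed units) modulo every odd prime $q$, so that the immunity claims against Conditions [5.1], [5.2] for $\alpha=-1$ and [5.3] for $\alpha=2$ of Theorem~\ref{th:attack} survive, is the right way to keep the application intact. Your case analysis of the resultant (including the subcase $3\mid k$, $k\geq 9$) is correct.

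One residual gap on your side: for odd $k\geq 5$ your method yields only $\Phi_{2^rk}^{+}(1)=\pm 1$, whereas the statement asserts the exact value $1$; the sign step you sketch is not yet a proof, and the congruence $2a\equiv\phi(n)\pmod 6$ does not separate the two candidates, since they differ by the factor $\zeta_6^3=-1$. The paper's composition formula settles the sign immediately for $k\geq 5$ (numerator and denominator both equal the nonzero integer $\Phi_k^+(-1)$), so the clean corrected statement is $\Phi_{2^rk}^{+}(1)=1$ for odd $k\geq5$ and $\Phi_{2^r\cdot 3}^{+}(1)=-2$, rather than your coarser $\{\pm1,\pm2\}$. (In passing: the paper's line $u_2(x)=x^2-1$ is a typo for $u_2(x)=x^2-2$, which is what makes $u_{2^j}(1)=-1$ true.)
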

\begin{proof}

Using \cite[Theorem 2.6]{alan}, we have that $\Phi_{2^r}^+(x)=u_{2^{r-2}}(x)$ and, if $k\ge3$,
$$\Phi_{2^rk}^{+}(x)=\frac{\Phi_k^{+}(u_{2^r}(x))}{\Phi_k^{+}(u_{2^{r-1}}(x))},$$
where $u_n(x):=2t_n\left(x/2\right)$, being $t_n(x)$ the $n$-th Tchebycheff polynomial of the first kind.
Since $u_1(x)=x$, $u_2(x)=x^2-1$ and $u_j(u_l(x))=u_{jl}(x)$ for all $j,l\ge0$, we obtain that $u_{2^n}(1)=-1$ for all $n\ge1$ and $u_{1}(1)=1$, so $\Phi_{2^r}^+(1)=1$ if $r=2$, $\Phi_{2^r}^+(1)=-1$ if $r\ge3$ and
$\Phi_{2^rk}^{+}(1)= 1$. To see that $\Phi_{2^r}^+(2)=2$ and $\Phi_{2^rk}^{+}(2)= 1$ we can use the same argument, taking into account that $u_n(2)=2$ for all $n\geq 1$ (see \cite[Corollary 2.4]{alan}). We can also give the following alternative proof: for each $n\geq 1$, consider the rational expression
$$
r_n(x):=\Phi_n^{+}(x+x^{-1})x^{\frac{\phi(n)}{2}}.
$$
Since $x+x^{-1}=x^{-1}(x^2+1)$, we see that $r_n(x)$ is a polynomial with integer coefficients. Moreover, since $\Phi_n^{+}(\zeta_n+\zeta_n^{-1})=0$, then $r_n(x)$ vanishes at $\zeta_n$ and hence $\Phi_n(x)\mid r_n(x)$. Since the degree of $r_n(x)$ is precisely $\phi(n)$, then $r_n(x)=\Phi_n(x)$ up to a non-zero rational scalar. Moreover, is is easy to see that this scalar factor is the leading coefficient of $\Phi_{n}^{+}(x)$, which is $1$. Hence
$\Phi_{2^rk}^{+}(2)=\Phi_{2^rk}(1)$, which is equal to 2 if $k=1$ and equal to 1 if $k\ge3$.
\end{proof}

Hence, if for $K_{2^rk}^{+}$ both RLWE and PLWE are equivalent, then also the RLWE problem will be immune against this attack based on the roots $\alpha=\pm 1, \pm 2$ for any odd prime $q$, while $K_{2^rk}$ is only provably immune against the attack based on $\alpha=\pm 1$. The goal of the rest of our article is to prove the equivalence between both problems in the sense of Definition \ref{equivdef} for $k=1$, $k=p$ and $k=pq$, with $p$ and $q$ arbitrary odd primes and $p\neq q$.

\section{The polynomial equivalence in the $2^rpq$ maximal totally real cyclotomic subextension}

Here we establish the equivalence of the RLWE and PLWE problems for the family $\Phi_{2^rk}^{+}(x)$ in the generalised framework described in \cite{blanco2} for $k=1$, $k=p$ and $k=pq$. The starting point of our approach there and also here is the family of Tchebycheff polynomials of the first kind:

\begin{defn}The family of Tchebycheff polynomials of the first kind is defined by any of the following equivalent properties:
\begin{itemize}
\item[a)] $t_i(x)=\cos(i\arccos(x))$ for $i\geq 0$.
\item[b)] $t_0(x)=1, t_1(x)=x$ and $t_i(x)=2xt_{i-1}(x)-t_{i-2}(x)$ for $i\geq 2$.
\end{itemize}
\label{tcheby}
\end{defn}
Set, for $i\ge0$, $u_i(x)=2t_i(x/2)$. An easy induction argument shows that $u_i(x)\in\mathbb{Z}[x]$ for each $i\geq 0$. By an iterated application of the identity for the cosine of the sum of two angles, it is easy to show the following property:
$$
t_{i+j}(x)+t_{|i-j|}(x)=2t_i(x)t_j(x),
$$ 
which yields the following identity:
\begin{equation}\label{eq:recursion}
u_{i+j}(x)+u_{|i-j|}(x)=u_i(x)u_j(x)
\end{equation}
for any $i,j\ge0$.

For $n\geq 1$ such that $4\mid n$, set $N=n/4$ and denote 
$$V_N=\left(u_i(\psi_{2k-1})\right)_{0\le i\le N-1\atop 1\le k\le N},$$
where $\psi_{2k-1}=2\cos\left(\frac{2\pi(2k-1)}{n}\right)$. Using \cite[Theorem 1]{kuian} it was shown in \cite[Proposition 3.4]{blanco2} that 
\begin{equation}\label{eq:boundVN}
\cond(V_{N})\leq N(N+1).
\end{equation}

\subsection{Main result}
\subsubsection{Case $n=2^r$} Assume that $n=2^r$, with $r\geq 2$ so that $N=2^{r-2}=\deg(K_{2^r}^{+})$. In this case, $\Phi_n^{+}(x)$ is the minimal polynomial of $\psi_1=2\cos\left(\frac{\pi}{2^{r-1}}\right)$, whose Galois conjugates are all the nodes $\psi_{2k-1}$ with $1\leq k\leq 2^{r-2}$. Hence the matrix $V_N$ already provides the polynomially conditioned lattice monomorphism from $(\mathcal{O},\sigma_C)$ to $(\mathcal{O}_{K_n^{+}},\sigma)$. Since the image of $V_N$ is a finite order sublattice of $(\mathcal{O}_{K_n^{+}},\sigma)$, multiplying by this index and composing with $V_N^{-1}$ provides a polynomially conditioned monomorphism from $(\mathcal{O}_{K_n^{+}},\sigma)$ to $(\mathcal{O},\sigma_C)$. This will be discussed in Theorem \ref{mainthm}, which applies to the three cases under study, namely $n=2^r$, $n=2^rp$ and $n=2^rpq$.

\subsubsection{Case $n=2^rp$} Assume that $n=2^rp$, with $p$ an odd prime number, so $N=2^{r-2}p$ and $m=\deg(K_{2^rp}^{+})=2^{r-2}(p-1)$. Since we are looking for a polynomially conditioned lattice isomorphism between $(\mathcal{O},\sigma_C)$ and $(\mathcal{O}_{K_n^{+}},\sigma)$, we need to exclude from $V_N$ the $N-m=2^{r-2}$ rows corresponding to the values $\psi_{2k-1}$ such that $(2k-1,p)\neq1$, i.e. the nodes of the form
$$\psi_{jp}=2\cos\left(\frac{j\pi}{2^{r-1}}\right),$$
and we also need to extract $2^{r-2}$ columns so that the remaining $m\times m$ matrix is invertible. This is equivalent to multiplying $V_N$ by a suitable matrix $A$ obtaining
$$
V_NA=\left(\begin{array}{cc}
V_{N-m}  & O\\
C & R_m
\end{array}
\right),
$$
where $O$ is the $(N-m)\times m$ zero matrix,  $V_{N-m}$ is a square matrix of dimension $N-m$, $C$ is an $m\times (N-m)$ matrix and $R_m$ is a square invertible matrix of size $m$ with entries in $\mathcal{O}_{K_n^{+}}$. If $A$ is polynomially conditioned, as we will prove, so will be $R_m$.

We begin with a permutation of the rows of $V_N$ so that those $2^{r-2}$ rows corresponding to the nodes $\psi_{jp}$ are in the first positions. For simplicity, we still denote this matrix by $V_N$, which clearly has the same condition number as before.

The strategy, as in \cite{blanco2}, will be to perform certain transformations on the columns of $V_N$ such that the $(N-m)\times m$ upper-right submatrix of $V_N$ becomes the zero matrix and to control the condition number of the matrix which contains all the elementary column operations. Write $V_N=[C_0, C_1,\dots, C_{N-1}]$, where $C_i$ is the $(i+1)$-th column of $V_N$. Observe that $u_{2^{r-2}}(\psi_{jp})=0$ for all $j$, so the first $N-m$ entries of the column $C_{2^{r-2}}$ are zero. We perform the following transformations:
for each $l$ with $1\le l\le 2^{r-2}(p-1)-1$, we replace the column $C_{2^{r-2}+l}$ by the sum
\begin{equation*}\label{eq:columns1}
C_{2^{r-2}+l}+C_{|2^{r-2}-l|}
\end{equation*}
This transformation is equivalent to multiplying $V_N$ on the right by the matrix
$$
A=\left(
\begin{array}{cc}
I_{N-m} & K\\
O & L_m
\end{array}
\right),
$$
where $I_{N-m}$ is the identity matrix of dimension $N-m$, $K$ is a $(N-m)\times m$ matrix whose columns are zero except for the $k$-th ones, with $2\le k\le 2^{r-1}$, all of which have one entry equal to 1 and the rest of them equal to 0, and $L_m$ is matrix whose $k$-diagonals are zero except for $k=0$ and $k=2^{r-1}$ (where by $k$-diagonal of a square matrix $(a_{ij})$ we mean the diagonal whose terms are the entries of the form $a_{i, i+k}$), both of which are constant and equal to 1.  

By relation~\eqref{eq:recursion}, after this transformation the term corresponding to the node $\psi_{2k-1}$ in the new column $C_{2^{r-2}+l}$ is the value at $\psi_{2k-1}$ of $u_{2^{r-2}}u_l$, so it is zero on $\psi_{jp}$. Regarding the condition number of $A$, we have that

\begin{prop}\label{pro:condition1}
$$\cond (A)<\sqrt{3m(5m^2+2m)}.$$
\end{prop}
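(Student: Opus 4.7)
My plan is to exploit the block upper triangular form of $A$ to invert it explicitly and then to bound the Frobenius norms of $A$ and $A^{-1}$ by direct enumeration of their entries, all of which turn out to be integers of very small magnitude.

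First I would verify that every entry of $A$ lies in $\{0,1\}$ and count: $2^{r-2}$ ones in $I_{N-m}$, $2^{r-1}-1$ ones in $K$ (one per nonzero column), $m$ ones on the main diagonal of $L_m$, and $m-2^{r-1}$ ones on its $2^{r-1}$-diagonal, giving
$$\|A\|^2 = 2m + 2^{r-2} - 1 < 3m,$$
since $2^{r-2} = N-m \le m$.

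Block inversion yields
$$A^{-1} = \begin{pmatrix} I_{N-m} & -KL_m^{-1}\\ O & L_m^{-1}\end{pmatrix},$$
so the task reduces to describing $L_m^{-1}$ and $KL_m^{-1}$. The key structural fact is $L_m = I_m + J$, where $J$ carries $1$'s on the $2^{r-1}$-diagonal; then $J^k$ is supported on the $k\cdot 2^{r-1}$-diagonal, $J$ is nilpotent of index $(p-1)/2$, and the geometric series terminates:
$$L_m^{-1} = \sum_{k=0}^{s}(-J)^k,\qquad s = \tfrac{p-1}{2}-1.$$
The nonzero entries of $L_m^{-1}$ are $\pm 1$ on pairwise disjoint diagonals, so
$$\|L_m^{-1}\|^2 = \sum_{k=0}^{s}\bigl(m - k\cdot 2^{r-1}\bigr) = \frac{m(m+2^{r-1})}{2^r} \le \frac{m^2}{4} + \frac{m}{2}.$$

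For $KL_m^{-1}$ I would first identify the nonzero entries of $K$ explicitly from the recipe defining $A$: row $1$ carries a single $1$ in column $2^{r-2}+1$, and for $2\le i\le 2^{r-2}$, row $i$ carries exactly two $1$'s, in columns $2^{r-2}+2-i$ and $2^{r-2}+i$. Hence row $i$ of $KL_m^{-1}$ equals row $2^{r-2}+1$ of $L_m^{-1}$ (if $i=1$) or the sum of rows $2^{r-2}+2-i$ and $2^{r-2}+i$ of $L_m^{-1}$ (if $2\le i\le 2^{r-2}$). The critical point is that the two rows summed have disjoint column supports: both supports are arithmetic progressions of common difference $2^{r-1}$, starting at indices differing by $2i-2 < 2^{r-1}$. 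Consequently every entry of $KL_m^{-1}$ lies in $\{-1,0,1\}$, and
$$\|KL_m^{-1}\|^2 = \sum_{j=2}^{2^{r-1}} n_j = (2^{r-1}-1)\cdot\tfrac{p-1}{2} \le m,$$
where $n_j = (p-1)/2$ is the number of nonzero entries in row $j$ of $L_m^{-1}$ for $1\le j \le 2^{r-1}$. Assembling everything,
$$\|A^{-1}\|^2 = (N-m) + \|KL_m^{-1}\|^2 + \|L_m^{-1}\|^2 \le \frac{m^2}{4} + \frac{5m}{2} < 5m^2 + 2m,$$
and therefore $\cond(A)^2 = \|A\|^2\,\|A^{-1}\|^2 < 3m(5m^2+2m)$. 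The main obstacle is the support-disjointness verification in the combined rows of $L_m^{-1}$; without it, some entries of $KL_m^{-1}$ could have magnitude $2$, forcing a more delicate accounting (though not changing the final order of magnitude).
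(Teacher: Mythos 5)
Your proof is correct and follows essentially the same route as the paper: block inversion of $A$, the Toeplitz/nilpotent description of $L_m^{-1}$, and Frobenius-norm counting of small integer entries. The only difference is that you verify the disjoint-support property to conclude the entries of $KL_m^{-1}$ lie in $\{0,\pm1\}$ and compute exact norms, whereas the paper simply bounds those entries in absolute value by $2$ and uses cruder estimates; both arguments land under the stated bound $\sqrt{3m(5m^2+2m)}$.
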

\begin{proof}
Since that matrix $A$ has only $N+2^{r-1}-1+m-2^{r-1}=N+m-1$ non-zero terms, all of them equal to 1, and $N<2m$, we have
$$\|A\|^2<3m.$$
The inverse of $A$ is given by
$$A^{-1}=\left(
\begin{array}{cc}
I_{N-m} & -KL_m^{-1}\\
O & L_m^{-1}
\end{array}
\right).$$
The matrix $L_m^{-1}$ has $k$-diagonal equal to zero except if $k=j2^{r-1}$, for $j\ge0$, in which case the diagonal is constant and equal to $(-1)^j$. Therefore, the entries of $KL_m^{-1}$ are all bounded in absolute value by 2, and
$$\|A^{-1}\|^2<N+4(N-m)m+\frac{m(m-1)}2<5m^2+2m,$$
which shows the bound for the condition number of $A$.
\end{proof}

\subsubsection{Case $n=2^rpq$}\label{subsub:2^rpq} Assume that $n=2^rpq$, with $p, q$ odd prime numbers and $p<q$, so $N=2^{r-2}pq$ and $m=\deg(K_{2^rpq}^{+})=2^{r-2}(p-1)(q-1)$. In this case, we need to exclude from $V_N$ the $N-m=2^{r-2}(p+q-1)$ rows corresponding to the values $\psi_{2k-1}$ such that $(2k-1,pq)\neq1$, i.e. the nodes of the form
$$\psi_{jp}=2\cos\left(\frac{j\pi}{2^{r-1}q}\right) \text{ and } \psi_{jq}=2\cos\left(\frac{j\pi}{2^{r-1}p}\right),$$
with $j$ odd such that $jp\le 2N-1$ and $jq\le 2N-1$ and we also need to extract $N-m$ columns so that the remaining $m\times m$ matrix is invertible. 

As in the previous case, we begin with a permutation of the rows of $V_N$ so that those $N-m$ rows corresponding to the nodes $\psi_{jp}$ and $\psi_{jq}$ are in the first positions, and we still denote this matrix by $V_N$.

We will use the following fact:
\begin{lem}\label{eq:zerosum}
Notations as before, we have that $u_{2^{r-2}q}(\psi_{jp})=0$ for all $j$, and 
$$\sum_{k=1}^{\frac{p-1}{2}}(-1)^{k+1}u_{2^{r-1}k}(\psi_{jq})=1$$
for all $j$ such that $p\nmid j$.
\end{lem}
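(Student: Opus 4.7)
The plan is to exploit the trigonometric identity $u_i(2\cos\phi)=2\cos(i\phi)$, which follows immediately from $u_i(x)=2t_i(x/2)$ and the defining property $t_i(\cos\phi)=\cos(i\phi)$. Since the nodes in play have the explicit form $\psi_{jp}=2\cos\bigl(j\pi/(2^{r-1}q)\bigr)$ and $\psi_{jq}=2\cos\bigl(j\pi/(2^{r-1}p)\bigr)$, and the indices $j$ arising from the rows of $V_N$ are always odd (since they come from the original indexing $\psi_{2k-1}$), everything reduces to evaluating cosines at specific rational multiples of $\pi$.

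The first identity is then immediate: writing $\phi=j\pi/(2^{r-1}q)$, we get $u_{2^{r-2}q}(\psi_{jp})=2\cos(2^{r-2}q\cdot\phi)=2\cos(j\pi/2)$, which vanishes because $j$ is odd. So I would dispose of this in one line.

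For the second identity, setting $\theta=j\pi/p$, the sum becomes
$$
S=\sum_{k=1}^{(p-1)/2}(-1)^{k+1}2\cos(k\theta)=-\sum_{k=1}^{(p-1)/2}\bigl((-e^{i\theta})^k+(-e^{-i\theta})^k\bigr).
$$
The key trick is to observe that because $j$ is odd and $p$ is odd, the number $l:=(j+p)/2$ is an integer, and $-e^{i\theta}=e^{i\pi(j+p)/p}=e^{2\pi i l/p}$. The hypothesis $p\nmid j$ forces $\gcd(l,p)=1$, so $\omega:=-e^{i\theta}$ is a primitive $p$-th root of unity, and $-e^{-i\theta}=\omega^{-1}=\omega^{p-1}$. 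Re-indexing the negative-exponent part via $k\mapsto p-k$ assembles the two partial sums into a single telescoping sum over all nontrivial $p$-th roots of unity:
$$
\sum_{k=1}^{(p-1)/2}\omega^k+\sum_{k=1}^{(p-1)/2}\omega^{-k}=\sum_{k=1}^{p-1}\omega^k=-1,
$$
by the classical identity for roots of unity. Hence $S=-(-1)=1$, as claimed.

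The proof is essentially routine once the right substitutions are made; the only subtle point, and the one I would emphasise, is using the parity of $j$ (not just $p\nmid j$) to convert $-e^{ij\pi/p}$ into an honest primitive $p$-th root of unity, which is what makes the alternating sign collapse into a clean cyclotomic sum. I do not anticipate any genuine obstacle beyond keeping the book-keeping of the indices straight.
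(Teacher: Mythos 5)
Your proof is correct, and it takes a mildly but genuinely different route in the evaluation step. For the first identity you make explicit what the paper dismisses as ``clear'' (namely $u_{2^{r-2}q}(\psi_{jp})=2\cos(j\pi/2)=0$ because the relevant $j$ are odd), which is the right reading of ``notations as before''. For the alternating cosine sum, the paper absorbs the sign into the angle, writing $(-1)^{k+1}\cos(jk\pi/p)=-\cos(jk\pi/p+k\pi)$, and then invokes Lagrange's trigonometric identity $\sum_{k=1}^{M}\cos(k\theta)=-\tfrac12+\tfrac{\sin((M+1/2)\theta)}{2\sin(\theta/2)}$ with $\theta=(j+p)\pi/p$, using that $j+p$ is even to make the numerator sine vanish and $p\nmid j$ to keep the denominator nonzero. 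You instead absorb the sign into the base, observe that $\omega=-e^{ij\pi/p}=e^{2\pi i(j+p)/(2p)}$ is a nontrivial $p$-th root of unity (the same parity observation and the same use of $p\nmid j$, which in fact gives you slightly more than you need --- $\omega^p=1$ and $\omega\neq1$ already suffice), and pair $k\leftrightarrow p-k$ to reduce everything to $\sum_{k=1}^{p-1}\omega^k=-1$. The two arguments are close cousins, since Lagrange's identity is itself the real part of the same geometric series, but yours is self-contained and purely algebraic, avoiding the citation of the trigonometric identity and the check of when it applies, whereas the paper's is a one-line application of a standard formula; the essential input is identical in both: oddness of $j$ and $p$ makes $(j+p)/2$ an integer, and $p\nmid j$ rules out the degenerate case.
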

\begin{proof}
The first identity is clear. For the second one, we use Lagrange trigonometric identity:
$$
\sum_{k=1}^M\cos\left(k\theta\right)=-\frac{1}{2}+\frac{\sin((M+1/2)\theta)}{2\sin(\theta/2)},\mbox{ for any }M\geq 1\mbox{ and any }0<\theta<2\pi.
$$
Notice that the identity remains true for any $\theta$ which is not an integer multiple of $2\pi$. Hence, taking $M=(p-1)/2$ and $\theta=(j+p)\pi/p$ (which is not a multiple of $2\pi$ since $p\nmid j$), we have:
$$
\sum_{k=1}^{\frac{p-1}2}(-1)^{k+1}\cos\left(\frac{jk\pi}{p}\right)=-\sum_{k=1}^{\frac{p-1}2}\cos\left(\frac{jk\pi}p+k\pi\right)=
\frac{1}{2}-\frac{\sin((j+p)\pi/2)}{2\sin((j+p)\pi/(2p))}=\frac{1}{2}$$
since $j$ and $p$ are odd. Therefore,
\begin{align*}
\sum_{k=1}^{\frac{p-1}{2}}(-1)^{k+1}u_{2^{r-1}k}(\psi_{jq})&=2\sum_{k=1}^{\frac{p-1}{2}}(-1)^{k+1}\cos\left(\frac{kj\pi}p\right)=1.\qedhere
\end{align*}
\end{proof}

Write $V_N=[C_0, C_1,\dots, C_{N-1}]$, where $C_i$ is the $(i+1)$-th column of $V_N$. We perform the following transformations:
for each $l$ with $0\le l\le m-1$, we replace the column $C_{2^{r-2}(q+p-1)+l}$ by the sum
\begin{equation}\label{eq:firstcolumn}
\sum_{j=1}^{\frac{p-1}2}(-1)^{j+1}\left[C_{2^{r-2}(q+p-(2j-1))}+C_{2^{r-2}(q-p+(2j-1))}\right]+(-1)^{\frac{p+3}2}C_{2^{r-2}q}\quad \mbox{ if }l=0,
\end{equation}
and by the sum
%\begin{small}
\begin{align}\label{eq:nextcolumns}
\sum_{j=1}^{\frac{p-1}2}&(-1)^{j+1}\left[C_{2^{r-2}(q+p-(2j-1))+l}+C_{|2^{r-2}(q-p+(2j-1))-l|}+C_{2^{r-2}(q-p+(2j-1))+l}+C_{|2^{r-2}(q+p-(2j-1))-l|}\right]\\
+\,&(-1)^{\frac{p+3}2}\left[C_{2^{r-2}q+l}+C_{|2^{r-2}q-l|}\right]\nonumber\quad\mbox{ if }l\geq 1. 
\end{align}
%\end{small}
This transformation is equivalent to multiplying $V_N$ on the right by the matrix
$$
A=\left(
\begin{array}{cc}
I_{N-m} & K\\
O & L_m
\end{array}
\right),
$$
where $I_{N-m}$ is the identity matrix, $K$ is a $(N-m)\times m$ matrix and $L_m$ is an upper triangular matrix whose main diagonal is the identity.

By relation \eqref{eq:recursion}, after this transformation the term corresponding to the node $\psi_{2k-1}$ in the new column $C_{2^{r-2}(q+p-1)+l}$ is the value at $\psi_{2k-1}$ of 
$$u_{2^{r-2}q}\left[\sum_{j=1}^{\frac{p-1}2}(-1)^{j+1}u_{2^{r-1}\frac{p-(2j-1)}2}+(-1)^{\frac{p+3}2}\right]=(-1)^{\frac{p+1}2}u_{2^{r-2}q}\left[\sum_{j=1}^{\frac{p-1}2}(-1)^{j+1}u_{2^{r-1}j}-1\right]$$
in case $l=0$, or
\begin{align*}
&u_{2^{r-2}q}\left[\sum_{j=1}^{\frac{p-1}2}(-1)^{j+1}\left(u_{2^{r-2}(p-(2j-1))+l}+u_{2^{r-2}(p-(2j-1))-l}\right)+(-1)^{\frac{p+3}2}u_l\right]\\
=\,&u_{2^{r-2}q}u_l\left[\sum_{j=1}^{\frac{p-1}2}(-1)^{j+1}u_{2^{r-1}\frac{p-(2j-1)}2}+(-1)^{\frac{p+3}2}\right]\\
=\,&(-1)^{\frac{p+1}2}u_{2^{r-2}q}u_l\left[\sum_{j=1}^{\frac{p-1}2}(-1)^{j+1}u_{2^{r-1}j}-1\right],
\end{align*}
in case $l\ge1$, so by Lemma~\ref{eq:zerosum} it is zero on $\psi_{jp}$ and $\psi_{jq}$. 

We have the following result for the blocks $K$ and $L_m$ of the matrix $A$:
\begin{lem}\label{lem:matrixA}
The matrix $K$ has entries in $\{0,\pm 1,\pm 2\}$, and the matrix $L_m$ is a Toeplitz matrix whose $k$-diagonal is equal to $(-1)^j$ if $k=j2^{r-1}$ for $0\le j\le p-1$, is equal to $(-1)^{j+1}$ if $k=j2^{r-1}$ with $q\le j\le q+p-1$ and is equal to 0 otherwise.
\end{lem}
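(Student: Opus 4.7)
The plan is to rewrite the column operation~\eqref{eq:nextcolumns} in a symmetric form exposing the $2^{r-1}$-periodicity of the shifts, and then catalog which column contributions land in the $L_m$ block and which in the $K$ block. Introducing $\sigma_c:=2^{r-2}(q+2c)$ for $c\in\{-(p-1)/2,\dots,(p-1)/2\}$ and $\varepsilon_c:=(-1)^{(p+3)/2+c}$, the substitution $c=\pm(p-(2j-1))/2$ collapses~\eqref{eq:nextcolumns} to
\[
\sum_{c=-(p-1)/2}^{(p-1)/2}\varepsilon_c\,\bigl[C_{\sigma_c+l}+C_{|\sigma_c-l|}\bigr]\qquad(l\ge 1),
\]
the four terms indexed by positive $j$ being recovered via $\varepsilon_c=\varepsilon_{-c}$. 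The shifts satisfy $0<\sigma_c\le N-m$, with equality only at $c=(p-1)/2$.

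For the Toeplitz description of $L_m$, I would classify each contribution according to its row index. The term $C_{\sigma_c+l}$ lies in $L_m$ exactly when $l\ge(N-m)-\sigma_c$, landing on the diagonal $k=(N-m)-\sigma_c=2^{r-1}((p-1)/2-c)$; writing $j=(p-1)/2-c\in\{0,\dots,p-1\}$ gives coefficient $\varepsilon_c=(-1)^{p+1-j}=(-1)^j$ since $p$ is odd. The term $C_{|\sigma_c-l|}$ lies in $L_m$ only in the subcase $l>\sigma_c$ with $l-\sigma_c\ge N-m$, landing on $k=(N-m)+\sigma_c=2^{r-1}(q+(p-1)/2+c)$; writing $j=q+(p-1)/2+c\in\{q,\dots,q+p-1\}$ gives $\varepsilon_c=(-1)^{j+q}=(-1)^{j+1}$ since $q$ is odd. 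The two families of $j$-values are disjoint as $q\ge p+2$. No two contributions collide within $L_m$: $\sigma_c+l=\sigma_{c'}+l$ and $l-\sigma_c=l-\sigma_{c'}$ both force $c=c'$, while $\sigma_c+l=l-\sigma_{c'}$ would demand $\sigma_c+\sigma_{c'}=0$, impossible. This yields the announced Toeplitz pattern together with upper triangularity and unit main diagonal; the $l=0$ column from~\eqref{eq:firstcolumn} contributes only $\varepsilon_{(p-1)/2}=(-1)^{p+1}=1$ at the $(0,0)$ position of $L_m$, consistently.

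For the bound on $K$, every row $R<N-m$ can receive at most one contribution from each of: (a) $R=\sigma_c+l$ (which forces $R>l$); (b$_1$) $R=\sigma_c-l$ with $l\le\sigma_c$; (b$_2$) $R=l-\sigma_c$ with $l>\sigma_c$ (which forces $R<l$). Since (a) and (b$_2$) impose opposite strict inequalities on $R$ versus $l$, they cannot coexist, so at most two contributions sum at any $(R,l)$; each being $\pm 1$, the entry lies in $\{0,\pm 1,\pm 2\}$. The analogous treatment of~\eqref{eq:firstcolumn} yields only single $\pm 1$ contributions to $K$ in the $l=0$ column. The main obstacle is the bookkeeping in the $L_m$ analysis: one must verify that the absolute-value terms $|\sigma_c-l|$ land exactly where claimed without sign mismatches or cross-cancellations with the $\sigma_c+l$ contributions, and this reduces to the clean positivity observation $\sigma_c+\sigma_{c'}>0$.
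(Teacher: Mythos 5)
Your proof is correct and follows essentially the same route as the paper's: a direct bookkeeping of which shifted columns $C_{\sigma_c+l}$, $C_{|\sigma_c-l|}$ land in the $K$ block versus the $L_m$ block, a collision count (at most two coincidences per entry of $K$, none inside $L_m$), and the identification of the nonzero diagonals $k=2^{r-1}j$ with their alternating signs. Your symmetric parametrization by $\sigma_c$ and $\varepsilon_c$ is a cleaner repackaging of the paper's case analysis on the indices, not a genuinely different argument.
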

\begin{proof}
The condition on the entries of $K$ is clear for the first column, whose non-zero entries are in fact $\pm 1$, since the columns that appear in the sum \eqref{eq:firstcolumn} are all different. For the other columns of $K$, observe that for all $1\le i,j\le (p-1)/2$ we have
\begin{equation}\label{eq:columninequalities}
2^{r-2}(q+p-(2i-1))+l> 2^{r-2}q+l> 2^{r-2}(q-p+(2j-1))+l>l.
\end{equation}
Moreover, for all $1\le j\le (p-1)/2$ we have that 
\begin{enumerate}[-]
\item If $l\le 2^{r-2}q$ then $|2^{r-2}(q+p-(2j-1))-l|>|2^{r-2}q-l|$ and either
$$|2^{r-2}(q-p+(2j-1))-l|<l \quad\text{or}\quad |2^{r-2}(q-p+(2j-1))-l|<|2^{r-2}q-l|;$$
\item If $l>2^{r-2}q$ then $l>|2^{r-2}(q-p+(2j-1))-l|>|2^{r-2}q-l|$.
\end{enumerate}
These relations, together with \eqref{eq:columninequalities}, imply that each column in the sum \eqref{eq:nextcolumns} appears at most twice, and the statement on the entries of $K$ is proved. 

To show the statement for $L_m$, note that for each $l\ge 1$ an index $k\ge 2^{r-2}(q+p-1)$ in the sum \eqref{eq:nextcolumns} can only be of the form $|2^{r-2}(q-p+(2j-1))-l|$ if $l>2^{r-2}(q-p+(2j-1))$ (otherwise those terms are less than $2^{r-2}(q-2)$), can only be of the form $ |2^{r-2}(q+p-(2j-1))-l|$ if $l> 2^{r-2}(q+p-(2j-1))$ (otherwise those terms are less than $2^{r-2}(q+2)<2^{r-2}(q+p-1)$) and can only be of the form $|2^{r-2}q-l|$ if $l>2^{r-2}q$ (otherwise those terms are less that $2^{r-2}q$). In this case, for all $1\le i,j\le (p-1)/2$ we have
$$l-2^{r-2}(q-p+(2i-1))>l-2^{r-2}q>l-2^{r-2}(q+p-(2j-1)).$$
This relation, together with \eqref{eq:columninequalities}, shows that each column $C_k$ with $k\ge 2^{r-2}(q+p-1)$ appears at most once in the sum \eqref{eq:nextcolumns}. Observe that the maximum index that appears in the sum \eqref{eq:nextcolumns} among the terms of the form $2^{r-2}(q+p-(2j-1))+l$, $2^{r-2}q+l$ and $2^{r-2}(q-p+(2j-1))+l$ is $2^{r-2}(q+p-1)+l$ and the minimum is $2^{r-2}(q-p+1)+l$, so those terms range the $k$-diagonals for $0\le k\le 2^{r-1}(p-1)$; the difference between two consecutive terms, which appear with opposite sign in the sum \eqref{eq:nextcolumns}, is $2^{r-1}$, so the statement for the diagonals of the form $2^{r-1}j$ with $0\le j\le p-1$ follows. Analogously, the maximum index that appears in the sum \eqref{eq:nextcolumns} among the terms of the form  $l-2^{r-2}(q-p+(2j-1))$, $l-2^{r-2}q$ and $l-2^{r-2}(q+p-(2j-1))$ is $l-2^{r-2}(q-p+1)$ and the minimum is $l-2^{r-2}(q+p-1)$, so those terms  range the $k$-diagonals for $2^{r-1}q\le k\le 2^{r-1}(q+p-1)$, and the statement for the columns of the form $2^{r-1}j$ with $q\le j\le q+p-1$ follows as in the previous case.
\end{proof}
As for the condition number, we have:
\begin{prop}\label{cond11}
With notations as before,
$$
\cond(A)<\sqrt{(5m^2+2m)(4m^4+m^2+2m)}.
$$
\end{prop}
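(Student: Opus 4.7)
The plan is to bound $\|A\|^2$ and $\|A^{-1}\|^2$ separately and multiply the resulting estimates, exploiting the submultiplicativity of the Frobenius norm.

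For $\|A\|^2$, I would simply add the Frobenius contributions of the three blocks: $\|I_{N-m}\|^2=N-m$, $\|K\|^2\le 4(N-m)m$ (using Lemma \ref{lem:matrixA}, since entries of $K$ are in $\{0,\pm1,\pm2\}$), and $\|L_m\|^2\le 2pm$ from the $2p$ non-zero $\pm 1$-diagonals of $L_m$. The elementary inequalities $N-m<m$ (equivalent to $(p-2)(q-2)>2$) and $2p\le m+1$ (equivalent to $p(q-3)\ge q-2$, clear for $p\ge 3$, $q\ge 5$) then yield $\|A\|^2<5m^2+2m$.

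For $\|A^{-1}\|^2$, the block upper-triangular form of $A$ gives
$$A^{-1}=\begin{pmatrix} I_{N-m} & -KL_m^{-1}\\ O & L_m^{-1}\end{pmatrix},$$
so the task reduces to bounding $\|L_m^{-1}\|$ and $\|KL_m^{-1}\|$. The main obstacle is $L_m^{-1}$. My approach rests on observing (via Lemma \ref{lem:matrixA}) that all non-zero entries of $L_m$ lie on diagonals indexed by multiples of $2^{r-1}$, so after permuting rows and columns by residue class modulo $2^{r-1}$ the matrix $L_m$ becomes block-diagonal with $2^{r-1}$ identical $s\times s$ upper-triangular Toeplitz blocks $M$, where $s=(p-1)(q-1)/2$. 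Encoding $M$ by its first row as a polynomial in $y$, a short manipulation delivers the factorisation
$$P(y)=\sum_{k=0}^{p-1}(-y)^k-\sum_{k=q}^{q+p-1}(-y)^k=\frac{(1+y^p)(1+y^q)}{1+y},$$
and hence the power series
$$\frac{1}{P(y)}=(1+y)\sum_{a,b\ge 0}(-1)^{a+b}y^{ap+bq}.$$
The decisive combinatorial input is that, since $\gcd(p,q)=1$, every integer $n<pq$ admits at most one non-negative representation $n=ap+bq$; as only coefficients of index $\le s-1<pq/2$ are needed, each such coefficient of $1/P(y)$ is bounded in absolute value by $2$.

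From this I obtain $\|M^{-1}\|^2\le 2s(s+1)$, whence $\|L_m^{-1}\|^2=2^{r-1}\|M^{-1}\|^2\le 2m(s+1)\le m^2+2m$ using $s\le m/2$ for $r\ge 2$. A row-and-column-sum (Schur) estimate then produces the spectral-norm bound $\|L_m^{-1}\|_{\mathrm{op}}\le 2s\le m$, so
$$\|KL_m^{-1}\|^2\le\|K\|^2\,\|L_m^{-1}\|_{\mathrm{op}}^2\le 4(N-m)m\cdot m^2=4m^3(N-m)<4m^4$$
via $N-m<m$. Adding the three pieces and using $N-m\le m-1$ to absorb the linear remainder into the cubic saving $4m^3(N-m)\le 4m^4-4m^3$ yields $\|A^{-1}\|^2<4m^4+m^2+2m$, and combining with the bound on $\|A\|^2$ gives the proposition.
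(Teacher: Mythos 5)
Your argument is correct and does deliver the stated bound, and its heart coincides with the paper's proof: both estimate $\|A\|$ and $\|A^{-1}\|$ separately, both read the inverse of the upper-triangular Toeplitz block off the reciprocal of its generating polynomial, both use the factorisation $(1+y^p)(1+y^q)/(1+y)$, and both control the resulting coefficients through the fact that an integer below $pq$ has at most one representation $ap+bq$ with $a,b\ge 0$. The differences are in the routing. First, you deflate $L_m$ by permuting indices according to their residue modulo $2^{r-1}$ into $2^{r-1}$ identical $s\times s$ Toeplitz blocks and work with a polynomial in $y$, whereas the paper works with $L_m$ itself and a polynomial in $x^{2^{r-1}}$; since you invert modulo $y^{s}$, this uniform treatment silently absorbs the paper's case split $p=3$ versus $p\ge 5$ (for $p=3$ the terms of $P(y)$ of degree $\ge q$ lie outside the block, so the literal first row of $M$ encodes only a truncation of $P$ — worth one explicit sentence, since inversion in $\mathbb{R}[y]/(y^s)$ depends only on $P$ modulo $y^{s}$, but it is not a gap). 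Second, you content yourself with the coefficient bound $2$ for $L_m^{-1}$ coming from uniqueness of representations, while the paper exploits the parity cancellation between $n$ and $n-1$ to get entries in $\{0,\pm 1\}$; your weaker bound still fits under $4m^4+m^2+2m$ because you compensate with the sharper counts $\|L_m\|^2\le 2pm$, $N-m\le m-1$ and $s\le m/2$. Third, for the off-diagonal block you use $\|KL_m^{-1}\|_F\le \|K\|_F\,\|L_m^{-1}\|_{\mathrm{op}}$ with a Schur (row/column sum) bound on the operator norm, whereas the paper bounds each entry of $KL_m^{-1}$ directly by $2m$; the two estimates are of the same order. In short, your proof is complete and slightly more structural (block decomposition plus norm inequalities), the paper's is more entrywise and avoids the operator norm altogether.
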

\begin{proof} By Lemma~\ref{lem:matrixA}, since the non-zero entries of $K$ are bounded in absolute value by 2 and the non-zero entries in $L_m$ are bounded in absolute value by 1, we have
%\begin{equation}\label{eqpart}
$$\|A\|^2\leq N+4(N-m)m+\frac{m(m-1)}2.$$
%\end{equation}
Since
$$\frac N m=\frac{p}{p-1}\frac{q}{q-1}<2$$
we obtain $\|A\|^2< 5m^2+2m$.
On the other hand, 
$$A^{-1}=\left(
\begin{array}{cc}
I_{N-m} & -KL_m^{-1}\\
O & L_m^{-1}
\end{array}
\right).$$
Since $L_m$ is an upper triangular Toeplitz matrix, if we denote $r(x)=1+\sum_{j=1}^{m-1} a_jx^j$, where $a_j$ is the entry of the $j$-diagonal of $L_m$, the inverse of $L_m$, which is also an upper triangular Toeplitz matrix, has $k$-diagonal equal to $b_k$, where $s(x)=\sum_{k\ge0}b_kx^k$ is the formal series such that $r(x)s(x)=1$. Denote $t=2^{r-1}$.
If $p=3$, then $(p-1)t\le m-1< qt$, so by Lemma~\ref{lem:matrixA} we have that
$r(x)=1-x^t+x^{2t}$. Since $r(x)(1+x^t)=1+x^{3t}$, then 
$$s(x)=(1+x^t)(1+x^{3t})^{-1}=(1+x^t)\sum_{i\ge0}(-1)^ix^{3it}$$
and the non-zero diagonals of $L_m^{-1}$ are $\pm1$. If $p\ge5$ then $(q+p-1)t\le m-1$, since
$$\dfrac{(q+p-1)t}m=\dfrac{2(q+p-1)}{(p-1)(q-1)}\le\frac{2(q+4)}{4(q-1)}<1.$$ 
Then, by Lemma~\ref{lem:matrixA} we have that
$$r(x)=1-x^t+x^{2t}-x^{3t}+\dots+x^{(p-1)t}+x^{qt}-x^{(q+1)t}+x^{(q+2)t}-x^{(q+3)t}+\dots+x^{(q+p-1)t}.$$
Since $r(x)(1+x^t)=1+x^{pt}+x^{qt}+x^{(p+q)t}=(1+x^{pt})(1+x^{qt})$, then
$$s(x)=(1+x^t)(1+x^{pt})^{-1}(1+x^{qt})^{-1}=\left(1+x^{t}\right)\sum_{i,j\ge0}(-1)^{i+j}x^{(ip+jq)t}.$$
Note that if $ip+jq=kp+lq+1$ for some $i,j,k,l\ge0$, then $i+j$ and $k+l$ have opposite parity, so the coefficient of $x^{(ip+jq)t}$ in the series $s(x)$ is zero; moreover, $ip+jq=kp+lq$ with $(i,j)\neq(k,l)$ can only hold if $ip+jq\ge pq$, so $(ip+jq)t\ge m$. This implies that the non-zero diagonals of $L_m^{-1}$ can only be $\pm 1$. Therefore the entries of $-KL_m^{-1}$ are bounded in absolute value by $2m$. Then,
$$\|A^{-1}\|^2\le N+4m^2(N-m)m+\frac{m(m-1)}2<4m^4+m^2+2m$$
and the bound for the condition number follows.
\end{proof}

\subsubsection{Conclusion}

Putting together the cases $n=2^r$, $n=2^rq$ and $n=2^rpq$, due to Proposition \ref{propcond}, from equation~\eqref{eq:boundVN} and Propositions~\ref{pro:condition1} and \ref{cond11} we obtain
\begin{align*}\label{cond12}
\cond(V_N)&\leq m(m+1) \quad \text{ if } n=2^r \\[2pt]
\cond(V_NA)&\leq 2m(2m+1)\sqrt{3m(5m^2+2m)}\quad \text{ if } n=2^rp \\[2pt]
\cond(V_NA)&\leq 2m(2m+1)\sqrt{(5m^2+2m)(4m^4+m^2+2m)}\quad \text{ if } n=2^rpq.
\end{align*}
Let us write
$$
V_NA=\left(
\begin{array}{cc}
V_{N-m} & O\\
C & R_m
\end{array}
\right),
$$
where $V_{N-m}$ is the principal minor of $V_N$ of order $N-m$, which is invertible, $C$ is an $m\times(N-m)$ matrix and $R_m\in \mathrm{M}_{m\times(N-m)}(\mathcal{O}_{K_{2^rpq}^{+}})$ is invertible (since $V_N$ is so). The reason why the entries of $R_m$ belong to $\mathcal{O}_{K_{2^rpq}^{+}}$ is that they are linear combinations of the entries of $V_N$ with coefficients $0,\pm 1,\pm 2$. Hence, it is also immediate that
\begin{equation*}
\|R_m\|\leq \|V_NA\|.
\label{cond13}
\end{equation*}
Now, the inverse of $V_NA$ exists and is:
$$
(V_NA)^{-1}=\left(
\begin{array}{cc}
V_{N-m}^{-1} & O\\
-R_m^{-1}CV_{N-m}^{-1} & R_m^{-1}
\end{array}
\right),
$$
hence
\begin{equation*}
\|R_m^{-1}\|\leq \|(V_NA)^{-1}\|.
\label{cond14}
\end{equation*}
The previous inequalities imply that
\begin{equation*}\label{cond14}
\cond(R_m)\leq \cond(V_NA).
\end{equation*}
We are now in position to prove our main result:
\begin{thm}Let $p$ and $q$ be different odd prime numbers and let $r\geq 2$. For $n=2^r$, $n=2^rp$ and $n=2^rpq$, the RLWE and the PLWE problems are equivalent for the ring of integers $\mathcal{O}_{K_n^{+}}$ of $K_n^{+}$.
\label{mainthm}
\end{thm}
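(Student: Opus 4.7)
The plan is to assemble the pieces already established: the condition-number bound \eqref{eq:boundVN} for $V_N$, Propositions~\ref{pro:condition1} and \ref{cond11} for the auxiliary matrix $A$, and the algebraic observation of the preceding subsubsection that the bottom-right block $R_m$ of $V_NA$ is invertible with entries in $\mathcal{O}_{K_n^{+}}$. From these one exhibits, in each of the three cases $n=2^r$, $n=2^rp$ and $n=2^rpq$, a polynomially conditioned lattice monomorphism between $(\mathcal{O},\sigma_C)$ and $(\mathcal{O}_{K_n^{+}},\sigma)$, and then invokes the framework of \cite[Section~4]{RSW} together with \cite[Remark~2.8]{blanco2}, which allows any polynomially conditioned matrix to replace the usual Vandermonde matrix in the RLWE/PLWE equivalence.

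First I would fix the lattice map. For $n=2^r$ the matrix $V_N$ itself does the job: its columns are indexed by the monic integer polynomials $u_0,\dots,u_{N-1}$, which form a $\Z$-basis of $\mathcal{O}_{K_n^{+}}=\Z[\psi_1]$, so $V_N$ realises the canonical embedding of $\mathcal{O}_{K_n^{+}}$ up to a fixed unimodular change of basis on the coordinate side. For $n=2^rp$ and $n=2^rpq$ I would use $R_m$, whose rows are indexed precisely by the $m$ Galois conjugates of $\psi_1$ and whose columns are $\Z$-linear combinations, with small integer coefficients, of values of the $u_j$ at those conjugates; invertibility over $\C$ is inherited from $V_NA$, and the entries lie in $\mathcal{O}_{K_n^{+}}$ by construction.

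Next I would combine Proposition~\ref{propcond} with the displayed bounds preceding the theorem to obtain $\cond(R_m)\leq \cond(V_NA)\leq \cond(V_N)\cond(A)$, which is polynomial in $m$ in all three cases. The reduction of PLWE to RLWE is then effected by left multiplication by $R_m$, and the reverse reduction by left multiplication by $R_m^{-1}$; both are computable in time $\mathcal{O}(m^3)$ by Gaussian elimination, and each amplifies the noise by a factor bounded by $\|R_m\|$ and $\|R_m^{-1}\|$ respectively, hence by $\cond(R_m)$ in both directions. This yields the polynomial-time, polynomial-noise reduction required by Definition~\ref{equivdef}.

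The main obstacle, already resolved in the preceding blocks, is the explicit construction of the column-operation matrix $A$ for $n=2^rp$ and $n=2^rpq$ and the verification that both $\cond(A)$ and $\cond(V_NA)$ remain polynomial in $m$; the subtlety that the image of $R_m$ is, a priori, only a finite-index sublattice of $\sigma(\mathcal{O}_{K_n^{+}})$ is absorbed by the fact that multiplication by the (fixed) index is a scalar operation and hence, by the first item of Proposition~\ref{propcond}, does not affect the condition number. Once these points are granted, the theorem follows at once from the submultiplicativity of the condition number.
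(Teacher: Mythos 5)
Your proposal is correct and follows essentially the same route as the paper: the monomorphism $\mathbf{u}\mapsto R_m\mathbf{u}$ (with $\cond(R_m)\leq\cond(V_NA)\leq\cond(V_N)\cond(A)$ polynomial in $m$) gives the PLWE-to-RLWE direction, and the finite-index issue for the reverse direction is handled exactly as in the paper, by composing $R_m^{-1}$ with multiplication by the index $\lambda$, which by Proposition~\ref{propcond} leaves the conditioning unchanged. The only difference is one of detail: the paper spells out why the image is a finite-index sublattice (rows of $R_m$ are $(a(\psi_l),a(\psi_l)u_1(\psi_l),\dots)$ with $a(\psi_l)\neq 0$, giving the basis matrix $P$ and $\lambda=|\det(P)|$), which you assert in compressed form from the structure of $R_m$.
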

\begin{proof}Setting as before $n=2^rpq$ (the other cases are analogous), the map
$$
\begin{array}{rcl}
\Psi: \mathcal{O} & \to & \sigma(\mathcal{O}_{K_{2^rpq}^{+}})\\
\textbf{u}& \mapsto & R_m\textbf{u}
\end{array}
$$
is a well defined monomorphism of lattices, since $R_m\in\mathrm{M}_{m\times m}(\mathcal{O}_{K_{2^rpq}^{+}})$ and it is invertible. Moreover, as established above, the condition number of $R_m$ is $O(20m^5)$. This provides a polynomial reduction from PLWE to RLWE incurring into a distortion which is polynomial in $m$.

The map $\Psi$ is not necessarily surjective since we can only grant that $\Psi(\mathcal{O})$ is a sublattice of $\sigma(\mathcal{O}_{K_{2^rpq}^{+}})$ but since $\Psi$ is injective (because $R_m$ is invertible), the ranks of $\Psi(\mathcal{O})$ and $\sigma(\mathcal{O}_{K_{2^rpq}^{+}})$ coincide. In particular, the image $\Psi(\mathcal{O})$ is a sublattice of $\sigma(\mathcal{O}_{K_{2^rpq}^{+}})$ of finite index.

Indeed, as it was shown in section~\ref{subsub:2^rpq}, there exist polynomials $p_j(x)\in\mathbb{Z}[x]$, for $0\le j\le m-1$, of different degrees such that each row of the matrix $R_m$ is of the form $(p_0(\psi_l), p_1(\psi_l)), \dots, p_{m-1}(\psi_l))$ for some odd $l$ coprime with $p$ and $q$. 
Moreover, the polynomials $p_j$ are of the form
$$p_0(x)=a(x); \quad p_j(x)=a(x)u_j(x) \text{ for } 1\le j\le m-1,$$
where 
$$a(x)=(-1)^{\frac{p+1}2}u_{2^{r-2}q}(x)\left[\sum_{i=1}^{\frac{p-1}2}(-1)^{i+1}u_{2^{r-1}i}(x)-1\right].$$
Note that $a(x)$ is a polynomial of degree $N-m$ and its roots are exactly the $N-m$ nodes $\psi_{jp}$ and $\psi_{jq}$, with $j$ odd, so $a(\psi_l)\neq0$ for every odd $l$ coprime with $p$ and $q$. Consequently, the elements $\{p_0(\psi_1), p_2(\psi_1), \dots, p_{m-1}(\psi_1)\}$ are linearly independent over $\mathbb{Z}$, since otherwise the elements $\{1, u_1(\psi_1), \dots, u_{m-1}(\psi_1)\}$ would be linearly dependent, which is clearly not the case, as $\deg(u_i(x))=i$ and $\deg(K_{2^rpq}^+)=m$.

Hence, for $0\leq i\leq m-1$, taking into account that the minimal polynomial of $\psi_1$ has degree $m$, we can write   $p_i(\psi_1):=\sum_{j=0}^{m-1}a_{i,j}\psi_1^{j}$ so that the matrix
$$
P=(a_{ij})_{0\le i,j\le m-1}
$$
contains the coordinates (with respect to the power basis of $\mathcal{O}_{K_{2^rpq}^{+}}$) of the elements $\{p_0(\psi_1), p_1(\psi_1), \dots, p_{m-1}(\psi_1)\}$, which are a $\mathbb{Z}$-basis of a sublattice $\Lambda$ of $\mathcal{O}_{K_{2^rpq}^{+}}$ of the same rank and hence of finite index $\lambda:=|\mathrm{det}(P)|$ (see \cite[Theorem 1.17]{stewart}).

Now, pushing forward this inclusion by the canonical embedding $\sigma$, which is in particular a lattice isomorphism over its image, we observe that $\Psi(\mathcal{O})=\sigma(\Lambda)$ is a sublattice of $\sigma(\mathcal{O}_{K_{2^rpq}^{+}})$ of the same index $\lambda$.

Hence, for each $\textbf{u}\in \sigma(\mathcal{O}_{K_{2^rpq}^{+}})$, one has that $\lambda \textbf{u}\in \Psi(\mathcal{O})$ and the map
$$
\begin{array}{rcl}
\Psi^{-1}\circ [\lambda]: \sigma(\mathcal{O}_{K_{2^rpq}^{+}}) & \to & \mathcal{O}\\
\textbf{u}& \mapsto & R_m^{-1}(\lambda \textbf{u})
\end{array}
$$
provides a a polynomial reduction from RLWE to PLWE incurring into a distortion which is, since homotheties have condition number 1, polynomial in $m$.
\end{proof}

\end{document}